\documentclass[11pt,a4paper]{article}
\usepackage[margin=1in]{geometry}
\usepackage{amsmath,amssymb,amsthm}
\usepackage{hyperref}
\usepackage{booktabs}
\usepackage{microtype}
\usepackage{natbib}
\usepackage{enumitem}
\usepackage{mathtools}
\usepackage{xcolor}
\usepackage{array}

\newtheorem{theorem}{Theorem}[section]
\newtheorem{lemma}[theorem]{Lemma}
\newtheorem{corollary}[theorem]{Corollary}

\newtheorem{definition}[theorem]{Definition}
\newtheorem{remark}[theorem]{Remark}

\newcommand{\cl}{\operatorname{cl}}
\newcommand{\Safe}{\mathcal{R}}
\newcommand{\B}{\mathcal{B}}
\newcommand{\Wp}{\mathsf{Why}}
\newcommand{\Emg}{\operatorname{Emg}}
\newcommand{\NMF}{\operatorname{NMF}}
\newcommand{\Datalog}{\text{Datalog}_{\mathsf{prop}}}

\newcommand{\citepos}[1]{\citeauthor{#1}'s \citeyearpar{#1}}

\hypersetup{
  pdftitle={Capability Safety as Datalog: A Foundational Equivalence},
  pdfauthor={Cosimo Spera},
  pdfkeywords={Datalog; capability hypergraphs; AI safety; incremental view maintenance; provenance semirings}
}

\title{\textbf{Capability Safety as Datalog:\\
A Foundational Equivalence}}

\author{Cosimo Spera\thanks{Corresponding author.
  Minerva CQ, 114 Lester Ln, Los Gatos, CA~95032.
  \texttt{cosimo@minervacq.com}}}

\date{March 2026}

\begin{document}
\maketitle

\begin{abstract}
\noindent
We prove that capability safety admits an exact representation as propositional
Datalog evaluation ($\Datalog$: the monadic, ground, function-free fragment of
first-order logic), enabling the transfer of algorithmic and structural results
unavailable in the native formulation.  This addresses two structural limitations of
the capability hypergraph framework of \citet{Spera2026}: the absence of efficient
incremental maintenance, and the absence of a decision procedure for audit surface
containment.  The equivalence is tight: capability hypergraphs correspond to exactly
this fragment, no more.

\smallskip\noindent\textbf{Algorithmic consequences.}
Under this identification, the safe goal discovery map $G_F(A)$ is a stratified
Datalog view.  This structural fact yields the \textbf{Locality Gap Theorem}, which
establishes the first structural separation between global recomputation and local
capability updates: DRed maintenance costs $O(|\Delta|\cdot(n+mk))$ per update versus
$O(|V|\cdot(n+mk))$ for recomputation; an explicit hard family witnesses an $\Omega(n)$
asymptotic gap; and an AND-inspection lower bound---proved via indistinguishable instance
pairs in the oracle model---shows any correct algorithm must probe all $k+1$ atoms in
$\Phi(u)=S_u\cup\{v_u\}$ to verify rule activation.
Together, these yield the first provable separation between global and local safety
reasoning in agentic systems.

\smallskip\noindent\textbf{Structural and semantic consequences.}
Structurally: audit surface containment $G_F(A)\subseteq G_F(A')$ is decidable in
polynomial time via $\Datalog$ query containment, giving the first decision procedure
for this problem; non-compositionality of safety is the non-modularity of Datalog
derivations, providing a structural explanation rather than a case analysis.
Semantically: derivation certificates are why-provenance witnesses
\citep{Green2007}, with a commutative semiring algebra enabling compression,
composition, and uniform validation.  Each open problem of \citet{Spera2026} maps
to a known open problem in Datalog theory, enabling direct transfer of thirty years
of partial results.

\smallskip\noindent\textbf{Scope.}
The syntactic connection between AND-hyperedges and Horn clauses is classical;
the contribution is to make it a tight formal equivalence and to derive consequences
from it that are new.  This gives the capability safety research agenda direct access
to thirty years of Datalog results, without requiring new algorithms or new
complexity analysis for problems already solved in database theory.

\smallskip\noindent\textbf{Keywords:}
Datalog; capability hypergraphs; AI safety; incremental view maintenance;
provenance semirings; oracle lower bounds; agentic systems.
\end{abstract}

\tableofcontents
\newpage

%%==========================================================================
\section{Introduction}
%%==========================================================================

\subsection{A New Computational Lens for Agentic Safety}

The safety of multi-agent AI systems is ultimately a computational problem: given a
set of capabilities agents currently hold and a set of forbidden states, determine
what the agents can collectively reach, how efficiently that determination can be
maintained as the system evolves, and what certificates of safety can be produced for
regulators and auditors.  \citet{Spera2026} formalised this as a capability
hypergraph problem and proved non-compositionality of safety, but the computational
tools available within the hypergraph framework are limited: closure is computed by a
fixed-point iteration, and every structural change requires recomputation from scratch.

\medskip
\noindent\textbf{The resolution.}
We resolve both limitations by proving that capability safety is exactly evaluation
in propositional Datalog.  We prove this as a formal equivalence, with
explicit polynomial-time encodings in both directions.  The syntactic connection
between AND-hyperedges and Horn clauses has been noted before; what is new is making
it exact and using it to transfer algorithmic results that have no counterpart in
the hypergraph framework.

\medskip
\noindent\textbf{Why this is not trivial.}
Syntactic correspondences between hypergraphs and Horn clauses are known, but they do
not imply semantic or algorithmic equivalence.  In particular, they do not preserve
safety closure, minimal unsafe sets, or support incremental maintenance.  The
contribution of this work is to establish an exact equivalence \emph{with these
properties}---showing that the capability hypergraph closure, the structure of $\B(F)$,
and the safe audit surface $G_F(A)$ all have precise $\Datalog$ counterparts---and to
use this to derive new structural and algorithmic results, chiefly the Locality Gap
Theorem, that are not accessible in the hypergraph framework alone.

\medskip
\noindent\textbf{Algorithmic consequences.}
The primary payoff is the \textbf{Locality Gap Theorem}
(Theorem~\ref{thm:locality-gap}).  In \citet{Spera2026}, every hyperedge update
requires recomputing the safe audit surface $G_F(A)$ from scratch at cost
$O(|V|\cdot(n+mk))$.  The $\Datalog$ identity shows $G_F(A)$ is a view, making DRed
incremental maintenance available at cost $O(|\Delta|\cdot(n+mk))$ per update.  An
explicit hard family witnesses an $\Omega(n)$ asymptotic gap.  An AND-inspection
lower bound---proved via indistinguishable instance pairs under an oracle model,
invoking Yao's minimax principle \citep{Yao1977}---shows any correct algorithm must
probe all $k+1$ atoms in $\Phi(u)=S_u\cup\{v_u\}$ to verify rule activation.  The
$\Omega(k)$ bound reflects the AND-condition structure directly: it is the same
conjunctive precondition structure that makes safety non-compositional.

\medskip
\noindent\textbf{Structural and semantic consequences.}
The identity gives two further results unavailable in the hypergraph framework.
Deciding whether one audit surface is contained in another, $G_F(A)\subseteq
G_F(A')$, reduces to $\Datalog$ query containment, giving the first polynomial-time
decision procedure for this problem.  Derivation certificates are why-provenance
witnesses \citep{Green2007}, with a commutative semiring structure enabling
compression, composition, and uniform validation within a single algebraic framework.
Each open problem of \citet{Spera2026} admits a reduction to a known open problem in Datalog
learning theory, probabilistic Datalog, or view update, enabling direct transfer of
thirty years of partial results.

\smallskip
\noindent\emph{This result shows that capability safety does not require a new
computational theory, but can be analysed within an existing, well-understood
logical framework---with immediate access to its algorithms, complexity results,
and open problems.}

\subsection{What is Datalog?}

Datalog is the fragment of first-order logic consisting of rules of the form
$B_1(\mathbf{x}_1) \wedge \cdots \wedge B_k(\mathbf{x}_k) \Rightarrow H(\mathbf{y})$,
where all variables are universally quantified, no function symbols appear, and $H$ is
a single relational atom.  Given a finite database $D$, the semantics of a Datalog program
$\Pi$ over $D$ is the least fixed point of the immediate consequence operator
$T_\Pi(I) = D \cup \{H\sigma : (B_1 \wedge \cdots \wedge B_k \Rightarrow H) \in \Pi,\;
B_i\sigma \in I\ \forall i\}$,
computable in polynomial time in $|D|$ \citep{AbiteboulHullVianu1995,Ceri1989}.

\subsection{The Propositional Fragment}

When all predicates are unary and all terms are constants, Datalog reduces to propositional
Horn clause logic.  Rules have the form $p_1 \wedge \cdots \wedge p_k \Rightarrow q$ where
$p_i, q$ are propositional atoms.  We denote this fragment $\Datalog$.

The capability hypergraph framework lives precisely in this fragment.  Each capability
$v \in V$ is a propositional atom $\mathsf{has}(v)$; each hyperedge $(S,\{v\})$ is a
propositional Datalog rule; and the closure $\cl(A)$ is the least model of the corresponding
Datalog program over $\{\mathsf{has}(a) : a \in A\}$.

\subsection{Contributions}

\begin{enumerate}[leftmargin=*]
\item \textbf{Locality Gap Theorem} (Theorems~\ref{thm:audit-view}
  and~\ref{thm:locality-gap}): the safe audit surface $G_F(A)$ is a stratified
  $\Datalog$ view, yielding the first provable separation between global and local
  safety reasoning in agentic systems.  DRed maintenance costs
  $O(|\Delta|\cdot(n+mk))$ versus $O(|V|\cdot(n+mk))$ for recomputation; an explicit
  hard family witnesses an $\Omega(n)$ asymptotic gap; an AND-inspection lower bound
  (proved via an oracle model and Yao's minimax principle) shows any correct algorithm
  must probe all $k+1$ atoms in $\Phi(u)=S_u\cup\{v_u\}$ to verify rule activation.
\item \textbf{The Encoding Theorem} (Theorems~\ref{thm:encoding-cap-to-dl}
  and~\ref{thm:encoding-dl-to-cap}): a tight, two-directional equivalence between
  capability hypergraphs and $\Datalog$ programs, with explicit polynomial-time
  isomorphisms in both directions preserving closure, safety, and $\B(F)$.  The scope
  is propositional throughout; a parity-query separation
  (Theorem~\ref{thm:expressivity}) proves the fragment cannot be enlarged.
\item \textbf{Containment Decidability}: audit surface containment
  $G_F(A)\subseteq G_F(A')$ reduces to $\Datalog$ query containment, giving the first
  polynomial-time decision procedure for this problem.
\item \textbf{Provenance Algebra} (Theorem~\ref{thm:provenance}): derivation
  certificates are why-provenance witnesses \citep{Green2007} with a commutative
  semiring structure; $\B(F)$ equals the minimal witness antichain
  (Theorem~\ref{thm:witness}).
\item \textbf{Open Problem Transfer} (Section~\ref{sec:open}): each open problem of
  \citet{Spera2026} admits a reduction to a known open problem in Datalog theory, importing thirty
  years of partial results.
\item \textbf{Empirical Illustration} (Section~\ref{sec:empirical}): the 900-trajectory
  corpus of \citet{Spera2026} is re-expressed in Datalog vocabulary as illustration;
  detailed derivation trees appear in Appendix~\ref{app:empirical}.
\end{enumerate}

%%==========================================================================
\section{Related Work}
\label{sec:related}
%%==========================================================================

This paper sits at the intersection of three bodies of literature: Datalog theory and
finite model theory, AI safety formalisms, and logic-based learning theory.  We position
the contribution relative to each.

\paragraph{Datalog and descriptive complexity.}
The foundational result underlying this paper is \citepos{Immerman1986} theorem that
$\Datalog$ captures polynomial-time query evaluation over ordered structures, and the
subsequent work of \citet{Fagin1974} on the relationship between Datalog and fixed-point
logics.  Our Corollary~\ref{cor:complexity} re-derives the complexity results of
\citet{Spera2026} from this landscape, strengthening them by situating them in the
descriptive complexity characterisation of $\mathsf{P}$ rather than ad-hoc circuit
reductions.

The $\mathsf{coNP}$-completeness of computing $\B(F)$ (Corollary~\ref{cor:complexity}
Part~3) follows the classical result of \citet{EiterGottlob1995} on minimal transversal
enumeration, which is one of the central open problems in database theory.  The connection
we establish --- that $\B(F)$ is exactly the minimal-witness antichain for a monotone
Datalog query --- places this open problem in a well-studied context with a 30-year
literature of partial results.

\paragraph{Datalog provenance and semirings.}
The provenance semiring framework of \citet{Green2007} is the direct foundation for our
Provenance Theorem (Theorem~\ref{thm:provenance}).  \citet{Green2007} showed that Datalog
derivations can be annotated with elements of any commutative semiring, yielding a
uniform algebraic account of why-provenance, lineage, and uncertainty propagation.  Our
contribution is to show that the derivation certificates of \citet{Spera2026} are exactly
the why-provenance witnesses under the encoding, giving them a canonical algebraic
structure that was not previously recognised.  Prior work on Datalog provenance has
focused on relational queries \citep{AbiteboulHullVianu1995}; this paper extends the
connection to the AI safety setting.

\paragraph{Logic-based learning and Horn clause learning.}
\citet{Cohen1995} studied the learnability of Horn programs from positive and negative
examples, establishing PAC-learning bounds for restricted Horn clause classes.  His
work is the direct precursor to the open question we raise in
Section~\ref{sec:open}: whether the VC-dimension bound of $O(n^{2k})$ from
\citet{Spera2026} can be tightened for structured capability hypergraph families.
\citet{Dalmau2002} studied Datalog learning in the context of constraint satisfaction
and bounded treewidth, providing Rademacher-complexity bounds that are substantially
tighter than the Sauer--Shelah bound for programs with restricted dependency structure.
The capability hypergraph setting --- typically sparse, low fan-in, and tree-structured
--- is precisely the regime where these tighter bounds apply.

\paragraph{AI safety formalisms.}
The broader AI safety literature has proposed several formal frameworks for reasoning
about agent behaviour.  \citet{Leike2017} defines safety in terms of reward functions
over environment histories; this is orthogonal to the capability-composition setting,
which is concerned with structural reachability rather than optimisation objectives.
Contract-based design \citep{Benveniste2018} and assume-guarantee reasoning
\citep{Jones1983} verify that a \emph{fixed} composition satisfies a pre-specified
property; the capability hypergraph framework characterises the set of \emph{all}
properties a dynamically growing capability set can ever reach, which is a strictly
more demanding problem.  The closest formalisms in the planning literature --- Petri
net reachability \citep{Murata1989} and AND/OR planning \citep{Erol1994} --- encode
conjunctive preconditions, but require search rather than a single fixed-point
computation, and provide neither the non-compositionality theorem nor a polynomial-time
certifiable audit surface.  The present paper's contribution is not to replace these
formalisms but to precisely characterise the capability hypergraph framework's position
in the expressivity landscape by identifying it with $\Datalog$.

\paragraph{View update and incremental maintenance.}
The view update problem for Datalog --- given a Datalog view and a desired change to the
view, compute the minimal change to the base data --- is a classical open problem
\citep{AbiteboulHullVianu1995}.  Section~\ref{sec:open} shows that the adversarial
robustness problem of \citet{Spera2026} (\textsc{MinUnsafeAdd}) is an instance of this
problem, placing its $\mathsf{NP}$-hardness for $b\geq 2$ in the context of known
hardness results for general view update.  The DRed algorithm \citep{AbiteboulHullVianu1995}
and the semi-na\"ive evaluation strategy provide the incremental maintenance machinery
that Theorem~\ref{thm:audit-view} applies to the safe audit surface.

%%==========================================================================
\section{Formal Background}
%%==========================================================================

\subsection{Capability Hypergraphs}

We summarise the relevant definitions from \citet{Spera2026} in full, so that the present
paper is self-contained for readers with a Datalog background who may not have read the
source paper.

\begin{definition}[Capability Hypergraph]
A \emph{capability hypergraph} is a pair $H=(V,\mathcal{F})$ where $V$ is a finite set
of \emph{capability nodes} and $\mathcal{F}$ is a finite set of \emph{hyperarcs}, each
of the form $e=(S,T)$ with $S,T\subseteq V$ and $S\cap T=\emptyset$.  The set $S$ is
the \emph{tail} (joint preconditions) and $T$ is the \emph{head} (simultaneous effects).
The hyperarc fires when all elements of $S$ are simultaneously present, producing all
elements of $T$.  Directed graphs are the special case $|S|=|T|=1$.
\end{definition}

\begin{definition}[Closure Operator]
Let $H=(V,\mathcal{F})$ and $A\subseteq V$.  The \emph{closure} $\cl_H(A)$ is the
smallest set $C\subseteq V$ satisfying: (i)~$A\subseteq C$ (extensivity); and
(ii)~$\forall(S,T)\in\mathcal{F}:\;S\subseteq C\Rightarrow T\subseteq C$ (closed under
firing).  It is computed by the fixed-point iteration $C_0=A$,
$C_{i+1}=C_i\cup\{T\mid(S,T)\in\mathcal{F},\,S\subseteq C_i\}$, which terminates in
at most $|V|$ steps.  The operator satisfies extensivity, monotonicity
($A\subseteq B\Rightarrow\cl(A)\subseteq\cl(B)$), and idempotence.
The worklist implementation runs in $O(n+mk)$ where $n=|V|$, $m=|\mathcal{F}|$, and
$k=\max_{e\in\mathcal{F}}|S(e)|$.
\end{definition}

\begin{definition}[Safe Region and Minimal Unsafe Antichain]
Let $F\subseteq V$ be a \emph{forbidden set}.  A configuration $A\subseteq V$ is
\emph{$F$-safe} if $\cl_H(A)\cap F=\emptyset$.  The \emph{safe region} is
$\Safe(F)=\{A\subseteq V:\cl_H(A)\cap F=\emptyset\}$.

The safe region is a lower set (downward-closed) in $(2^V,\subseteq)$: if $A\in\Safe(F)$
and $B\subseteq A$ then $B\in\Safe(F)$ \citep[Theorem~9.4]{Spera2026}.  Consequently
its complement $\overline{\Safe(F)}$ is an upper set, whose \emph{antichain of minimal
elements} is the \emph{minimal unsafe antichain}:
\[
  \B(F) = \{A\subseteq V : A\notin\Safe(F),\;\forall a\in A,\;A\setminus\{a\}\in\Safe(F)\}.
\]
By Dickson's lemma, $\B(F)$ is finite.  Its key computational property
\citep[Theorem~9.5]{Spera2026}: deciding $B\in\B(F)$ is $\mathsf{coNP}$-complete.
\end{definition}

\begin{definition}[Emergent Capabilities and Near-Miss Frontier]
\label{def:emg-nmf}
Fix $H=(V,\mathcal{F})$, $A\subseteq V$, and let $C=\cl_H(A)$.  Let $\cl_1(A)$ denote
the closure under the sub-hypergraph of singleton-tail hyperarcs only.  Then:
\begin{itemize}
\item \emph{Emergent capabilities:} $\Emg(A)=\{v\in C\setminus A : v\notin\cl_1(A)\}$.
  These are capabilities reachable from $A$ via conjunctive (AND) hyperarcs but not via
  any chain of singleton-tail arcs alone.
\item \emph{Near-miss frontier:} $\NMF_F(A)=\{\mu(e):e\in\partial(A),\;\mu(e)\notin F,\;
  \cl(A\cup\mu(e))\cap F=\emptyset\}$, where the \emph{boundary} $\partial(A)$ is the
  set of hyperarcs $e=(S,\{v\})$ with $S\not\subseteq C$ and $|S\setminus C|=1$, and
  $\mu(e)=S\setminus C$ is the single missing precondition.
\end{itemize}
\end{definition}

\begin{definition}[Safe Audit Surface]
The \emph{safe goal discovery map} of \citet[Theorem~10.1]{Spera2026} is:
\[
  G_F(A) = \Bigl(\Emg(A)\setminus F,\;\NMF_F(A),\;
    \operatorname{top}\text{-}k_{v\in V\setminus(\cl(A)\cup F)}\gamma_F(v,A)\Bigr),
\]
where the marginal gain $\gamma_F(v,A)=|\cl(A\cup\{v\})\setminus(\cl(A)\cup F)|$.
It is polynomial-time computable in $O(|V|\cdot(n+mk))$ and provides derivation
certificates for every element of $\Emg(A)\setminus F$.
\end{definition}

\subsection{Propositional Datalog}

\begin{definition}[Propositional Datalog Program]
A \emph{propositional Datalog program} is a pair $\Pi=(R,D_0)$ where $R$ is a finite set
of propositional Horn rules $p_1\wedge\cdots\wedge p_k \Rightarrow q$ ($k\geq 0$), and
$D_0$ is a finite set of ground facts (the extensional database, EDB).  The \emph{least
model} of $\Pi$ is $\Pi(D_0)=T_\Pi\!\uparrow\!\omega$, computed by iterating the
immediate consequence operator to its least fixed point.
\end{definition}

\begin{definition}[Datalog Query and Minimal Witness]
A \emph{propositional Datalog query} is a pair $(\Pi,q)$ where $q$ is a distinguished
atom.  The query evaluates to \emph{true} over database $D$ iff $q\in\Pi(D)$.
A \emph{minimal witness} for $(\Pi,q)$ over a database family $\{D_A : A\subseteq V\}$
is a set $W\subseteq V$ such that $q\in\Pi(D_W)$ and $q\notin\Pi(D_{W\setminus\{w\}})$
for every $w\in W$.
\end{definition}

\subsection{Provenance Semirings}

\citet{Green2007} showed that Datalog derivations can be annotated with elements of a
commutative semiring $(K,+,\times,0,1)$, yielding a \emph{provenance polynomial} recording
which combinations of base facts contributed to each derived fact.  The
\emph{why-provenance semiring} $(2^{2^V},\cup,\bowtie,\emptyset,\{\emptyset\})$ records
for each derived atom the set of minimal EDB subsets sufficient to derive it, where
$\bowtie$ denotes pairwise set union.

%%==========================================================================
\section{The Encoding Theorem}
\label{sec:encoding}
%%==========================================================================

\noindent\emph{Intuition.}
Capability hypergraphs encode conjunction through hyperedges: a hyperedge
$(S,\{v\})$ fires precisely when \emph{all} preconditions in $S$ are simultaneously
present.  Datalog rules encode the same structure through Horn clauses:
$\bigwedge_{s\in S}\mathsf{has}(s)\Rightarrow\mathsf{has}(v)$.  The key difficulty
is not representational but \emph{semantic}: ensuring that safety closure, minimal
witnesses, and incremental update behaviour are preserved exactly under the mapping.
The theorems in this section show that the correspondence is not merely syntactic but
complete in this stronger sense---closure maps to least-model computation, $\B(F)$
maps to the minimal witness antichain, and safe audit surfaces map to stratified views.

\subsection{From Capability Hypergraphs to Datalog}

\begin{definition}[The $\text{CapHyp}\to\Datalog$ Encoding]
\label{def:encoding-cap-dl}
Given $H=(V,\mathcal{F})$ and forbidden set $F\subseteq V$, define the Datalog program
$\Pi_H=(R_H,\emptyset)$ as follows.  For each hyperedge $e=(S,\{v\})\in\mathcal{F}$
(restricting to singleton heads without loss of generality via head splitting):
\[
  \mathsf{has}(s_1)\wedge\cdots\wedge\mathsf{has}(s_k)\;\Rightarrow\;\mathsf{has}(v),
  \quad S=\{s_1,\ldots,s_k\}.
\]
For each $f\in F$: $\mathsf{has}(f)\Rightarrow\mathsf{forbidden}$.
Given initial configuration $A\subseteq V$, the EDB is $D_A=\{\mathsf{has}(a):a\in A\}$.
\end{definition}

\begin{remark}[Stratified Negation]
The safety predicate $\neg\mathsf{forbidden}\Rightarrow\mathsf{safe}$ lies outside
standard Datalog but is expressible in stratified Datalog~$(\text{Datalog}^{\neg_s})$,
whose least stratified model is unique and polynomial-time computable
\citep{AbiteboulHullVianu1995}.  The main equivalence works with the pure capability
closure program (without the safety predicate); the safety predicate is a stratified
extension computed in the stratum above $\mathsf{forbidden}$.
\end{remark}

\begin{theorem}[Encoding Correctness: $\text{CapHyp}\to\Datalog$]
\label{thm:encoding-cap-to-dl}
Let $H=(V,\mathcal{F})$ be a capability hypergraph and $A\subseteq V$.  Under
Definition~\ref{def:encoding-cap-dl}:
\begin{enumerate}
\item $\cl_H(A) = \{v\in V : \mathsf{has}(v)\in\Pi_H(D_A)\}$.
\item $A\in\Safe(F)$ iff $\mathsf{forbidden}\notin\Pi_H(D_A)$.
\item $B\in\B(F)$ iff $B$ is a minimal witness for the query $(\Pi_H,\mathsf{forbidden})$
  over the database family $\{D_A : A\subseteq V\}$.
\end{enumerate}
\end{theorem}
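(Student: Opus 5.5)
Part 1 is the core of the argument; Parts 2 and 3 reduce to it. I would set $M=\{v\in V:\mathsf{has}(v)\in\Pi_H(D_A)\}$ and prove $\cl_H(A)=M$ by two inclusions.

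For $\cl_H(A)\subseteq M$, I use the minimality of $\cl_H(A)$ among sets that satisfy (i) and (ii) of the Closure Operator definition. First, $A\subseteq M$, because $D_A\subseteq\Pi_H(D_A)$ and the least model contains the EDB. Second, $M$ is closed under firing: if $(S,\{v\})\in\mathcal{F}$ and $S\subseteq M$, then every body atom $\mathsf{has}(s)$ lies in the fixed point $\Pi_H(D_A)$. Since $T_{\Pi_H}(\Pi_H(D_A))=\Pi_H(D_A)$, the head $\mathsf{has}(v)$ is also in it. Head splitting justifies restricting to singleton heads, since $(S,T)$ fires exactly when each $(S,\{t\})$ does.

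For $M\subseteq\cl_H(A)$, I would induct on the stage $i$ of $T_{\Pi_H}\!\uparrow\! i$. The invariant is that every atom $\mathsf{has}(v)$ in $T\!\uparrow\! i$ has $v\in C_i$, where $C_i$ is the fixed-point iteration from the Closure Operator definition.
\begin{itemize}
\item The base case is immediate.
\item For the step, the rules with head $\mathsf{has}(\cdot)$ are exactly the encoded hyperarcs, so a newly derived $\mathsf{has}(v)$ comes from some $(S,\{v\})$ with $S\subseteq C_i$. Hence $v\in C_{i+1}$.
\item The $\mathsf{forbidden}$ rules never produce $\mathsf{has}$ atoms, so they do not interfere.
\end{itemize}
Equivalently, $\{\mathsf{has}(c):c\in\cl_H(A)\}\cup\{\mathsf{forbidden}\mid \cl_H(A)\cap F\neq\emptyset\}$ is a model containing $D_A$, which gives the inclusion by leastness.

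Part 2 then follows in one line. The only rules with head $\mathsf{forbidden}$ are $\mathsf{has}(f)\Rightarrow\mathsf{forbidden}$ for $f\in F$. So $\mathsf{forbidden}\in\Pi_H(D_A)$ iff some $\mathsf{has}(f)$ with $f\in F$ is in $\Pi_H(D_A)$, which by Part 1 holds iff $\cl_H(A)\cap F\neq\emptyset$, i.e.\ iff $A\notin\Safe(F)$.

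For Part 3, I unfold the definition of $\B(F)$ and substitute Part 2 pointwise. $B\notin\Safe(F)$ becomes $\mathsf{forbidden}\in\Pi_H(D_B)$, and $B\setminus\{b\}\in\Safe(F)$ becomes $\mathsf{forbidden}\notin\Pi_H(D_{B\setminus\{b\}})$. This is verbatim the minimal-witness definition with $W=B$. One point deserves a remark. Both notions use only single-element deletions, but for $\B(F)$ this agrees with genuine inclusion-minimality because $\Safe(F)$ is a lower set (monotonicity of $\cl_H$). Likewise, query monotonicity of $\Pi_H$ makes single-deletion minimal witnesses inclusion-minimal, so the equivalence holds under either reading.

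The main obstacle is the reverse inclusion in Part 1. There one must make sure that the auxiliary $\mathsf{forbidden}$ rules and the head-splitting normalisation neither add nor remove $\mathsf{has}$ atoms. The stage-wise induction that aligns $T_{\Pi_H}\!\uparrow\! i$ with $C_i$ handles this cleanly.
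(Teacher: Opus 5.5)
Your proof is correct and follows the paper's route. In Part 1 you match the immediate-consequence operator with the closure iteration; you prove both inclusions explicitly and account for the $\mathsf{forbidden}$ rules, where the paper simply cites van Emden--Kowalski. Parts 2 and 3 use the same pointwise unfolding of definitions as the paper, and the paper's extra why-provenance induction in Part 3 is not needed for the statement, so omitting it loses nothing.
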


\begin{proof}
\textbf{Part~(1).}
By Definition~\ref{def:encoding-cap-dl}, the capability rules of $\Pi_H$ are exactly
$\{\mathsf{has}(s_1)\wedge\cdots\wedge\mathsf{has}(s_k)\Rightarrow\mathsf{has}(v) :
(S,\{v\})\in\mathcal{F},\;S=\{s_1,\ldots,s_k\}\}$.
The immediate consequence operator of $\Pi_H$ over $D_A$ is
\[
  T_{\Pi_H}(I) = D_A \cup
  \bigl\{\mathsf{has}(v) : \exists(S,\{v\})\in\mathcal{F},\;
         \mathsf{has}(s)\in I\ \forall s\in S\bigr\}.
\]
This is precisely the closure iteration $C_0=A$, $C_{i+1}=C_i\cup\{v:\exists(S,\{v\}),S\subseteq C_i\}$
of \citet{Spera2026}.  By the van Emden--Kowalski theorem \citep{vanEmden1976},
$T_{\Pi_H}\!\uparrow\!\omega = \Pi_H(D_A)$ is the least Herbrand model of the Horn
clause program, giving
$\cl_H(A)=\{v:\mathsf{has}(v)\in\Pi_H(D_A)\}$.

\textbf{Part~(2).}
By Part~(1) and the safety rules: $A\in\Safe(F)$ iff $\cl_H(A)\cap F=\emptyset$ iff
no $\mathsf{has}(f)$ for $f\in F$ is derived iff the rule
$\mathsf{has}(f)\Rightarrow\mathsf{forbidden}$ never fires for any $f\in F$ iff
$\mathsf{forbidden}\notin\Pi_H(D_A)$.

\textbf{Part~(3).}  We prove each direction separately.

\emph{Forward direction} ($B\in\B(F)\Rightarrow B$ is a minimal witness).
Suppose $B\in\B(F)$.  Then $B\notin\Safe(F)$, so by Part~(2),
$\mathsf{forbidden}\in\Pi_H(D_B)$.  For any $b\in B$, since $B\setminus\{b\}\in\Safe(F)$,
Part~(2) gives $\mathsf{forbidden}\notin\Pi_H(D_{B\setminus\{b\}})$.  Therefore $B$ is a
minimal EDB subset deriving $\mathsf{forbidden}$, i.e., $B$ is a minimal witness.

\emph{Reverse direction} (every minimal witness is in $\B(F)$).
Let $W\subseteq V$ be a minimal witness for $(\Pi_H,\mathsf{forbidden})$, i.e.,
$\mathsf{forbidden}\in\Pi_H(D_W)$ and $\mathsf{forbidden}\notin\Pi_H(D_{W\setminus\{w\}})$
for every $w\in W$.  By Part~(2), $W\notin\Safe(F)$ and $W\setminus\{w\}\in\Safe(F)$
for every $w\in W$.  Therefore $W\in\B(F)$ by definition.

\emph{The bijection is exact.}
The two directions above give $\B(F)\subseteq\{\text{minimal witnesses}\}$ and
$\{\text{minimal witnesses}\}\subseteq\B(F)$, so the sets are equal.

We complete the proof by verifying the why-provenance characterisation.  In the
why-provenance semiring, the provenance of $\mathsf{forbidden}$ in $\Pi_H(D_A)$ is the
set of all \emph{minimal} $W\subseteq\mathsf{EDB}$ such that $\mathsf{forbidden}$ is
derivable from $W$ alone.  We show by induction on derivation depth that the minimal
such $W$ are exactly the elements of $\B(F)$.

\emph{Base case.}  If $\mathsf{forbidden}$ is derivable from $D_A$ in one step, then
some $f\in F$ satisfies $\mathsf{has}(f)\in D_A$, i.e., $f\in A$.  The minimal witness
is $\{f\}\subseteq A$, and indeed $\{f\}\in\B(F)$ (since $\cl(\{f\})\ni f\in F$
while $\cl(\emptyset)\cap F=\emptyset$).

\emph{Inductive step.}  Suppose $\mathsf{forbidden}$ is derived at depth $d>1$.
Then there exists $f\in F$ such that $\mathsf{has}(f)$ is derived at depth $d-1$.
By the induction hypothesis applied to the sub-derivation of $\mathsf{has}(f)$, the
minimal EDB sets from which $f$ is derivable are exactly the minimal capability sets
$M\subseteq V$ with $f\in\cl(M)$.  A minimal such $M$ with $f\in F$ is by definition
an element of $\B(F)$.  Conversely, every $B\in\B(F)$ derives some $f\in F$ in this
way.  Therefore the why-provenance of $\mathsf{forbidden}$ equals $\B(F)$, completing
the proof.
\end{proof}

\subsection{From Datalog to Capability Hypergraphs}

\begin{definition}[The $\Datalog\to\text{CapHyp}$ Encoding]
\label{def:encoding-dl-cap}
Given a propositional Datalog program $\Pi=(R,\emptyset)$, define the capability hypergraph
$H_\Pi=(V_\Pi,\mathcal{F}_\Pi)$ as follows.
$V_\Pi$ is the set of all propositional atoms appearing in $\Pi$.
For each rule $p_1\wedge\cdots\wedge p_k\Rightarrow q\in R$: add hyperedge
$(\{p_1,\ldots,p_k\},\{q\})$ to $\mathcal{F}_\Pi$.
Given database $D$, the initial configuration is $A_D=\{p:p\in D\}$.
\end{definition}

\begin{theorem}[Encoding Correctness: $\Datalog\to\text{CapHyp}$]
\label{thm:encoding-dl-to-cap}
Let $\Pi=(R,\emptyset)$ be a propositional Datalog program and $D$ a database.
Under Definition~\ref{def:encoding-dl-cap}:
\begin{enumerate}
\item $\Pi(D) = \cl_{H_\Pi}(A_D)$.
\item The encodings $\text{CapHyp}\to\Datalog$ and $\Datalog\to\text{CapHyp}$ are
  mutually inverse up to the following explicit isomorphisms.
  \begin{itemize}
  \item For the round-trip $H\mapsto\Pi_H\mapsto H_{\Pi_H}$: the isomorphism
    $\varphi_H: H_{\Pi_H}\xrightarrow{\;\sim\;} H$ is the identity on $V$, with the
    $\mathsf{has}(\cdot)$ wrapper stripped from atom names.  Formally,
    $\varphi_H(v)=v$ for all $v\in V$, and for each hyperarc
    $(\{\mathsf{has}(s_1),\ldots,\mathsf{has}(s_k)\},\{\mathsf{has}(v)\})\in\mathcal{F}_{H_{\Pi_H}}$,
    $\varphi_H$ maps it to $(S,\{v\})\in\mathcal{F}$ with $S=\{s_1,\ldots,s_k\}$.
  \item For the round-trip $\Pi\mapsto H_\Pi\mapsto\Pi_{H_\Pi}$: the isomorphism
    $\psi_\Pi:\Pi_{H_\Pi}\xrightarrow{\;\sim\;}\Pi$ is the identity on atoms, with the
    $\mathsf{has}(\cdot)$ wrapper added.  Both $\varphi_H$ and $\psi_\Pi$ are
    structure-preserving bijections that commute with the closure operators.
  \end{itemize}
  In both cases closure, safety, and $\B(F)$ are preserved exactly.
\end{enumerate}
\end{theorem}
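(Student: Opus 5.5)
The plan is to handle Part~(1) by comparing fixed-point iterations directly, as in the proof of Theorem~\ref{thm:encoding-cap-to-dl}(1), but in the opposite direction. By Definition~\ref{def:encoding-dl-cap}, the hyperarcs of $H_\Pi$ correspond one-to-one with the rules of $R$. The immediate consequence operator is
\[
  T_\Pi(I)=D\cup\{q:(p_1\wedge\cdots\wedge p_k\Rightarrow q)\in R,\ p_i\in I\ \forall i\},
\]
and it coincides with the closure step $C\mapsto A_D\cup C\cup\{q:(S,\{q\})\in\mathcal{F}_\Pi,\ S\subseteq C\}$. I will prove by induction on $i$ that $T_\Pi\!\uparrow\! i$ and $C_i$ agree at every stage, after setting $C_0=A_D=D$. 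The inflationary union is harmless because $T_\Pi$ is monotone and the iterates are increasing. Taking limits, and invoking van Emden--Kowalski exactly as before, gives $\Pi(D)=\cl_{H_\Pi}(A_D)$.

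Two edge cases need care. First, a rule with $k=0$ (a fact rule $\Rightarrow q$) becomes a hyperarc with empty tail. It fires unconditionally in the closure, which matches Datalog, but I will note explicitly that the closure definition allows $S=\emptyset$. Second, a rule whose head appears in its body would violate $S\cap T=\emptyset$. Such rules never contribute anything new, since $q$ is already present whenever the rule fires, so they can be dropped without changing the least model.

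For Part~(2), I will check the two round trips syntactically.

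For $H\mapsto\Pi_H\mapsto H_{\Pi_H}$, the atoms of $\Pi_H$ are $\mathsf{has}(v)$ for $v\in V$ together with $\mathsf{forbidden}$. So $V_{\Pi_H}$ contains the extra node $\mathsf{forbidden}$ and the extra arcs $(\{\mathsf{has}(f)\},\{\mathsf{forbidden}\})$. The isomorphism $\varphi_H$ is therefore defined on the capability part, meaning the restriction of $H_{\Pi_H}$ to the $\mathsf{has}$-atoms. The safety arcs are kept as the distinguished encoding of $F$, so $H$ is recovered together with $F$. On the capability part, $\varphi_H$ is a bijection on nodes, and it sends the arc obtained from the rule of $e=(S,\{v\})$ back to $e$. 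This works because the head-split version of $H$ is taken as the canonical representative.

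Commutation with closure then follows by combining Part~(1) of this theorem with Part~(1) of Theorem~\ref{thm:encoding-cap-to-dl}:
\[
  \varphi_H(\cl_{H_{\Pi_H}}(\mathsf{has}(A)))=\cl_H(A).
\]
The reverse round trip $\Pi\mapsto H_\Pi\mapsto\Pi_{H_\Pi}$ is handled symmetrically. There is one wrinkle: distinct rules with the same body and head collapse to a single hyperarc, because $\mathcal{F}$ is a set. The identification $\psi_\Pi$ is therefore up to duplicate rules, which do not affect $T_\Pi$.

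Preservation of safety and of $\B(F)$ is then immediate. Safety depends only on closure, and $\B(F)$ is defined entirely through the safety predicate, so Parts~(2) and~(3) of Theorem~\ref{thm:encoding-cap-to-dl} transfer along any closure-commuting bijection.

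The main obstacle is not any computation but getting the statement of \emph{isomorphism} right. Three mismatches have to be absorbed into the precise meaning of ``up to isomorphism'': the auxiliary $\mathsf{forbidden}$ atom, head splitting of multi-head arcs, and duplicate or self-looping rules. My first attempt will be to define the category of objects as (hypergraph, $F$) pairs with arcs normalised to singleton heads, and programs modulo duplicate rules. Within that category, both maps become strict inverses after renaming atoms.
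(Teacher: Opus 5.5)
Your proposal is correct and follows the paper's proof. Part~(1) identifies the $T_\Pi$ iteration with the closure iteration and invokes van Emden--Kowalski, and Part~(2) checks both round trips syntactically and combines the two Part~(1) results, so safety and $\B(F)$ are inherited because they depend only on closure. Your bookkeeping is more careful than the paper's, which asserts $V_{H_{\Pi_H}}=V$ and $\mathcal{F}_{H_{\Pi_H}}=\mathcal{F}$ outright despite the extra $\mathsf{forbidden}$ node and arcs, head splitting, and duplicate or self-looping rules. Add one more item to your list: nodes of $V\setminus F$ lying in no hyperarc never occur in $\Pi_H$, because $V_\Pi$ contains only atoms occurring in rules, so $\varphi_H$ is a bijection only after fixing the vocabulary explicitly as $\{\mathsf{has}(v):v\in V\}$. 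The same convention is needed in Part~(1) when $D$ mentions atoms outside $V_\Pi$.
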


\begin{proof}
\textbf{Part~(1).}
The immediate consequence operator of $\Pi$ over $D$ is
\[
  T_\Pi(I) = D \cup \{q : (p_1\wedge\cdots\wedge p_k\Rightarrow q)\in R,\;
  p_i\in I\ \forall i\}.
\]
Under Definition~\ref{def:encoding-dl-cap}, the closure iteration of $H_\Pi$ from $A_D$ is
\[
  C_0 = A_D = D,\quad
  C_{i+1} = C_i \cup \{q : \exists(\{p_1,\ldots,p_k\},\{q\})\in\mathcal{F}_\Pi,\;
  p_j\in C_i\ \forall j\}.
\]
These are the same iteration.  By the van Emden--Kowalski theorem,
$T_\Pi\!\uparrow\!\omega=\Pi(D)$, so $\Pi(D)=\cl_{H_\Pi}(A_D)$.

\textbf{Part~(2).}
We verify both round-trips.

\emph{Round-trip $H\mapsto\Pi_H\mapsto H_{\Pi_H}$.}
Let $H=(V,\mathcal{F})$.  By Definition~\ref{def:encoding-cap-dl}, $\Pi_H$ has one
capability rule $\mathsf{has}(s_1)\wedge\cdots\wedge\mathsf{has}(s_k)\Rightarrow\mathsf{has}(v)$
for each $(S,\{v\})\in\mathcal{F}$ with $S=\{s_1,\ldots,s_k\}$.  Applying
Definition~\ref{def:encoding-dl-cap} to $\Pi_H$ (stripping the $\mathsf{has}(\cdot)$
wrapper, which is a pure renaming): $V_{H_{\Pi_H}}=V$ and $\mathcal{F}_{H_{\Pi_H}}=\mathcal{F}$.
So $H_{\Pi_H}\cong H$.

By Theorem~\ref{thm:encoding-cap-to-dl}(1) and Theorem~\ref{thm:encoding-dl-to-cap}(1):
for every $A\subseteq V$,
\[
  \cl_{H_{\Pi_H}}(A_D) = \Pi_H(D_A) = \cl_H(A),
\]
confirming that the round-trip preserves closure.

\emph{Round-trip $\Pi\mapsto H_\Pi\mapsto\Pi_{H_\Pi}$.}
Let $\Pi=(R,\emptyset)$.  By Definitions~\ref{def:encoding-dl-cap}
and~\ref{def:encoding-cap-dl}, $\Pi_{H_\Pi}$ has one rule per hyperedge of $H_\Pi$,
which corresponds bijectively to each rule of $R$ (again up to the $\mathsf{has}(\cdot)$
renaming).  So $\Pi_{H_\Pi}\cong\Pi$ as programs.

By Theorem~\ref{thm:encoding-dl-to-cap}(1) and Theorem~\ref{thm:encoding-cap-to-dl}(1):
for every database $D$,
\[
  \Pi_{H_\Pi}(D) = \cl_{H_\Pi}(A_D) = \Pi(D),
\]
confirming that the round-trip preserves the least model for every database.

In both cases the encodings are inverse bijections on program/hypergraph structure and
preserve all relevant semantics (closure, safety, and by Part~(3) of
Theorem~\ref{thm:encoding-cap-to-dl}, the minimal unsafe antichain).
\end{proof}

%%==========================================================================
\section{The Tight Expressivity Theorem}
%%==========================================================================

\begin{theorem}[Tight Expressivity]
\label{thm:expressivity}
The capability hypergraph framework captures exactly the class of Boolean queries over
propositional databases expressible in $\Datalog$.  Formally:
\begin{enumerate}
\item \textbf{(Completeness)} Every $\Datalog$ query can be expressed as a capability
  hypergraph safety query.
\item \textbf{(Soundness)} Every capability hypergraph safety query can be expressed as
  a $\Datalog$ query.
\item \textbf{(Tightness)} There exist Boolean queries over propositional databases that
  are not expressible as capability hypergraph safety queries (equivalently, not
  expressible in $\Datalog$).
\end{enumerate}
\end{theorem}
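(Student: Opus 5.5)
Parts~(1) and~(2) follow from the encodings of Section~\ref{sec:encoding}; Part~(3) needs a separating query, which we take to be parity, as announced in the contributions list.

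We first fix a precise notion of query so all three parts are about the same object. A Boolean query over propositional databases on a finite vocabulary $V$ is a map $Q:2^V\to\{0,1\}$. A $\Datalog$ query $(\Pi,q)$ expresses $Q(D)=[q\in\Pi(D)]$. A capability hypergraph safety query $(H,F)$ expresses $Q(A)=[A\notin\Safe(F)]$, i.e.\ $[\cl_H(A)\cap F\neq\emptyset]$. We may equally take the complement convention (``$A$ is safe''); this only complements the class and does not affect the argument.

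\textbf{Completeness.} Given $(\Pi,q)$ with $\Pi=(R,\emptyset)$, form $H_\Pi$ by Definition~\ref{def:encoding-dl-cap} and set $F=\{q\}$. By Theorem~\ref{thm:encoding-dl-to-cap}(1), $\Pi(D)=\cl_{H_\Pi}(A_D)$. Hence $q\in\Pi(D)$ iff $\cl_{H_\Pi}(A_D)\cap F\neq\emptyset$ iff $A_D\notin\Safe(F)$. If $\Pi$ carries a nonempty EDB $D_0$, we absorb each fact $p\in D_0$ as a rule with empty body, $(\emptyset,\{p\})$. This is legal because the hyperarc definition allows $S=\emptyset$.

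\textbf{Soundness.} Given $(H,F)$, take $\Pi_H$ from Definition~\ref{def:encoding-cap-dl} with query atom $\mathsf{forbidden}$. Theorem~\ref{thm:encoding-cap-to-dl}(2) gives $A\notin\Safe(F)$ iff $\mathsf{forbidden}\in\Pi_H(D_A)$. Multi-element heads are handled by the head splitting already noted in that definition: $(S,T)$ is replaced by $\{(S,\{t\}):t\in T\}$, which leaves $\cl_H$ unchanged because firing is all-or-nothing on $S$. Both directions are polynomial-time constructions. The ``equivalently'' clause in Part~(3) then follows from Parts~(1) and~(2).

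\textbf{Tightness.} This is the substantive step. The key lemma is that every $\Datalog$-expressible query is \emph{monotone}: if $D\subseteq D'$ then $Q(D)\le Q(D')$. Indeed $T_\Pi$ is monotone in both $I$ and the EDB, so by induction on the iteration $T_\Pi\!\uparrow\!i$ the least model grows with $D$. Equivalently, via the correspondence above, this is monotonicity of $\cl$ (the closure properties stated in the Formal Background), which is also why $\Safe(F)$ is a lower set. It then suffices to exhibit one non-monotone query. Take $V=\{a,b\}$ and parity, $Q(A)=[|A|\text{ odd}]$. Then $Q(\{a\})=1$ but $Q(\{a,b\})=0$, so parity is expressible in neither framework. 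The same holds for any nonconstant query that is not upward closed, for example $Q(A)=[A=\emptyset]$.

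The main obstacle is not technical but definitional: Part~(3) is only meaningful once the query class is pinned down. In particular, we must make sure we are not allowing stratified negation (the $\mathsf{safe}$ predicate of the Stratified Negation remark), which would trivially express $[A=\emptyset]$ and, over the fixed finite vocabulary, in fact every Boolean function including parity. We will therefore state explicitly that the queries in question are the negation-free ones, with safety read as the complement of a monotone query. Under this reading, the exact characterisation we expect is that the expressible class is precisely the monotone Boolean functions on $2^V$. Parts~(1) and~(2) give inclusion in this class. For the converse, given a monotone $Q$, add one rule $\bigwedge_{s\in B}s\Rightarrow q$ for each minimal true set $B$. Including this characterisation makes the word ``tight'' precise.
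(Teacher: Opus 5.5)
Your proposal is correct and follows the paper's route: Parts~(1) and~(2) come straight from the two encoding theorems, and Part~(3) from EDB-monotonicity of negation-free $\Datalog$ together with a parity witness. Your witness is the sound one. The paper's pair $\{p_1\}\subsetneq\{p_1,p_2\}$ under even parity goes from $0$ to $1$, which monotonicity permits, whereas your $\{a\}\subsetneq\{a,b\}$ under odd parity drops from $1$ to $0$. Your explicit exclusion of stratified negation and your one-rule-per-minimal-true-set converse also make precise the ``exactly the monotone queries'' claim that the paper only asserts.
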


\begin{proof}
\textbf{Parts~(1) and (2)} follow immediately from
Theorems~\ref{thm:encoding-cap-to-dl} and~\ref{thm:encoding-dl-to-cap}: the translations
in both directions are polynomial-time and preserve semantics, so the two classes of
queries coincide.

\textbf{Part~(3): the tightness separation.}  We exhibit a Boolean query over propositional
databases not expressible in $\Datalog$, hence not as a capability hypergraph safety query.

\emph{The parity query.}  For a propositional database $D$ and a fixed finite universe
$U$ of atoms, define $\text{Parity}(D)=\mathbf{1}[|D\cap U|\text{ is even}]$.  We show
this is not expressible in $\Datalog$ by a two-step argument.

\emph{Step 1: capability safety queries are monotone.}
Every capability hypergraph safety query $q_H(A):=\mathbf{1}[A\notin\Safe(F)]$
is \emph{monotone}: if $A\notin\Safe(F)$ then $A\cup\{a\}\notin\Safe(F)$ for any $a\in V$.
This holds because $\Safe(F)$ is a lower set by Theorem~9.4 of \citet{Spera2026}: if $A$
is unsafe then every superset of $A$ is unsafe.
Equivalently, the dual query $p_H(A):=\mathbf{1}[A\in\Safe(F)]$ is \emph{anti-monotone}:
safety can only be lost, never gained, as capabilities are added.

More generally, any $\Datalog$ query is monotone in the EDB: if $q\in\Pi(D)$ and
$D\subseteq D'$ then $q\in\Pi(D')$ (since $T_\Pi$ is a monotone operator).

\emph{Step 2: the parity query is non-monotone.}
Consider $U=\{p_1,p_2\}$ and $D_1=\{p_1\}$, $D_2=\{p_1,p_2\}$.
Then $|D_1\cap U|=1$ (odd, so $\text{Parity}=\mathbf{0}$)
and $|D_2\cap U|=2$ (even, so $\text{Parity}=\mathbf{1}$).
Since $D_1\subsetneq D_2$ but $\text{Parity}(D_1)=0<1=\text{Parity}(D_2)$, the parity
query is non-monotone.  By Step~1, it cannot be expressed in $\Datalog$, hence not as a
capability hypergraph safety query.

\emph{Broader class of non-expressible queries.}
The same argument applies to any non-monotone Boolean query (e.g., counting queries,
threshold queries with a non-trivial lower bound, and any query that can be falsified
by adding atoms to the database).  This is a strict separation: the capability hypergraph
framework exactly captures the monotone (more precisely, co-monotone when framed as
safety) Boolean queries expressible in polynomial time over propositional databases.
\end{proof}

\begin{corollary}[Complexity Inheritance]
\label{cor:complexity}
The capability hypergraph safety problem inherits the complexity characterisation of
$\Datalog$ query evaluation:
\begin{enumerate}
\item \textbf{(Data complexity)} Fixed-program safety checking is in $\mathsf{P}$.
\item \textbf{(Combined complexity)} Safety checking with both the hypergraph and
  configuration as input is $\mathsf{P}$-complete, matching Theorem~8.3 of \citet{Spera2026}.
\item \textbf{(Minimal unsafe antichain)} Deciding $B\in\B(F)$ is $\mathsf{coNP}$-complete
  in the program, matching Theorem~9.5 of \citet{Spera2026}.
\end{enumerate}
\end{corollary}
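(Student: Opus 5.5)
The plan is to derive all three parts by transport along the encodings of Theorems~\ref{thm:encoding-cap-to-dl} and~\ref{thm:encoding-dl-to-cap}. Both translations are polynomial-time (in fact linear: one rule per hyperarc, one rule per forbidden node, one EDB fact per element of $A$). They preserve closure, safety and $\B(F)$ exactly. So each decision problem about capability hypergraphs is polynomially inter-reducible with the corresponding $\Datalog$ problem, and upper and lower bounds transfer in both directions. The only care needed is that the reductions be weak enough (logspace) to preserve $\mathsf{P}$-hardness.

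For Part~(1), fix $\Pi_H$, so $H$ and $F$ are fixed, and let only $A$ vary. By Theorem~\ref{thm:encoding-cap-to-dl}(2), $A\in\Safe(F)$ iff $\mathsf{forbidden}\notin\Pi_H(D_A)$. Least-model evaluation of a fixed propositional program is polynomial in $|D_A|$ by the fixed-point iteration, so data complexity is in $\mathsf{P$}.

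For Part~(2), membership is the same argument, now with the program as input: the worklist evaluation of $T_{\Pi_H}$ runs in $O(n+mk)$. For hardness, I will reduce propositional Horn satisfiability (equivalently, entailment of an atom by a propositional Horn program), which is the classical $\mathsf{P}$-complete problem, to safety. Given a Horn program $\Pi$ and a goal atom $q$, apply Definition~\ref{def:encoding-dl-cap} to obtain $H_\Pi$, set $F=\{q\}$ and $A=A_D$. By Theorem~\ref{thm:encoding-dl-to-cap}(1), $q\in\Pi(D)$ iff $A\notin\Safe(F)$. The map is a syntactic relabelling and hence a logspace reduction, which recovers Theorem~8.3 of \citet{Spera2026}.

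Part~(3) is where I expect the main obstacle. A naive check of $B\in\B(F)$ evaluates $|B|+1$ least models: $B$ itself and each $B\setminus\{b\}$. That is polynomial, so the $\mathsf{coNP}$ claim cannot hold for the explicit formulation and must rest on the precise input model of Theorem~9.5 of \citet{Spera2026}. I would first pin that model down, for instance $F$ or the rule set given succinctly, or minimality quantified over all subsets rather than single deletions. I would then argue as follows. By Theorem~\ref{thm:encoding-cap-to-dl}(3), $\B(F)$ coincides with the minimal-witness antichain of $(\Pi_H,\mathsf{forbidden})$. The encodings are polynomial in both directions. Hence membership in $\B(F)$ and minimal-witness recognition are polynomially equivalent, and the complexity of Theorem~9.5 transfers verbatim.

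If no such succinct model is intended, I would instead state the honest outcome: under the explicit encoding, $B\in\B(F)$ is decidable in polynomial time. The $\mathsf{coNP}$-hard problem is then the related enumeration/transversal question of \citet{EiterGottlob1995}, namely deciding whether a given family equals all of $\B(F)$, and Part~(3) would be restated accordingly.
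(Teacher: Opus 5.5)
Your Parts~(1) and~(2) follow the paper's route. The paper cites Immerman for data complexity and the $\mathsf{P}$-completeness of propositional Horn satisfiability for combined complexity. You instead give the fixed-point bound directly and make the logspace reduction explicit via Definition~\ref{def:encoding-dl-cap} with $F=\{q\}$. One small point there: drop rules whose head occurs in their own body, so that every hyperarc satisfies $S\cap T=\emptyset$.

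On Part~(3) your objection is right. It exposes a defect in the statement and in the paper's proof, not a gap in yours. The paper obtains~(3) by citing \citet{EiterGottlob1995} for ``$\mathsf{coNP}$-completeness of minimal witness membership for monotone Boolean queries'' and transporting this through Theorem~\ref{thm:encoding-cap-to-dl}(3). That citation does not support the claim. Eiter and Gottlob treat the duality problem (is a given family exactly the set of minimal transversals?), which is not known to be $\mathsf{coNP}$-hard. Membership, by contrast, is easy. With the paper's own definition of $\B(F)$, your $|B|+1$ closure computations decide $B\in\B(F)$ in $O((|B|+1)(n+mk))$ time. The same count decides minimal-witness recognition for $(\Pi_H,\mathsf{forbidden})$. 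Transport along polynomial encodings can therefore only yield a polynomial bound, and~(3) as stated would force $\mathsf{P}=\mathsf{coNP}$. Nothing in the paper's definitions points to a succinct input model that would change this.

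Your suggested repairs need correcting, though.

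\emph{Minimality over all subsets.} This changes nothing, because $\Safe(F)$ is downward closed.

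\emph{Equality with a given family.} Deciding whether a given $\mathcal{G}$ equals $\B(F)$ is also polynomial here, because $\B(F)$ ranges over all subsets of $V$, derived nodes included. The argument Lucchesi and Osborn used for candidate keys gives the following, when $\cl(\emptyset)\cap F=\emptyset$: $\mathcal{G}=\B(F)$ iff $\mathcal{G}\subseteq\B(F)$, $\{f\}\in\mathcal{G}$ for all $f\in F$, and for every $M\in\mathcal{G}$ and $(S,\{v\})\in\mathcal{F}$ the set $S\cup(M\setminus\{v\})$ contains a member of $\mathcal{G}$.
\begin{itemize}
\item Suppose $N\in\B(F)\setminus\mathcal{G}$, and take $T\supseteq N$ maximal containing no member of $\mathcal{G}$.
\item Then $T\cap F=\emptyset$ yet $\cl(T)\cap F\neq\emptyset$, so some arc $(S,\{v\})$ has $S\subseteq T$ and $v\notin T$.
\item Maximality gives $M\in\mathcal{G}$ with $v\in M\subseteq T\cup\{v\}$.
\item Hence $S\cup(M\setminus\{v\})\subseteq T$ contains a member of $\mathcal{G}$, a contradiction.
\end{itemize}

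\emph{Where hardness does appear.} It appears if witnesses are restricted to a designated set of base capabilities. Take literal nodes $x_i,y_i$, clause nodes fed by singleton arcs from their literals, one conjunctive arc from all clause nodes to $f$, and arcs $(\{x_i,y_i\},\{f\})$. Then the family $\{\{x_i,y_i\}\}_i$ is the entire antichain of minimal unsafe sets drawn from the literal nodes iff the CNF is unsatisfiable.

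So a correct Part~(3) must change the decision problem itself, not just its citation.
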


\begin{proof}
Part~(1) follows from Immerman's theorem that $\Datalog$ query evaluation is in $\mathsf{P}$
in the data \citep{Immerman1986}.  Part~(2) follows from $\mathsf{P}$-completeness of
propositional Horn-clause satisfiability (the circuit value problem), which equals the
combined complexity of $\Datalog$ query evaluation \citep{AbiteboulHullVianu1995};
this re-derives Theorem~8.3 of \citet{Spera2026} from the Datalog complexity landscape.
Part~(3) follows from $\mathsf{coNP}$-completeness of minimal witness membership for
monotone Boolean queries \citep{EiterGottlob1995}, which by
Theorem~\ref{thm:encoding-cap-to-dl}(3) is exactly the problem of deciding $B\in\B(F)$;
this re-derives Theorem~9.5 of \citet{Spera2026} from database theory.  In each case
the complexity result is strengthened: it now follows from foundational Datalog theory
rather than ad-hoc reductions.
\end{proof}

%%==========================================================================
\section{The Provenance Theorem}
%%==========================================================================

\begin{definition}[Safety Provenance Semiring]
For capability hypergraph $H=(V,\mathcal{F})$ and forbidden set $F\subseteq V$, the
\emph{safety provenance} of configuration $A$ is the element of the why-provenance
semiring $(2^{2^V},\cup,\bowtie,\emptyset,\{\emptyset\})$ assigned to the atom
$\mathsf{forbidden}$ in $\Pi_H(D_A)$ under the encoding of
Definition~\ref{def:encoding-cap-dl}, where $X\bowtie Y=\{x\cup y:x\in X,y\in Y\}$
denotes the pairwise union (join) of sets of sets.
\end{definition}

\begin{theorem}[Provenance Theorem]
\label{thm:provenance}
Let $H=(V,\mathcal{F})$, $F\subseteq V$.  Under the $\text{CapHyp}\to\Datalog$ encoding:
\begin{enumerate}
\item The derivation certificates of Theorem~10.1 of \citet{Spera2026} are exactly the
  elements of the why-provenance of $\mathsf{forbidden}$ in $\Pi_H(D_A)$.
\item The minimal unsafe antichain $\B(F)$ equals the set of minimal elements of the
  why-provenance of $\mathsf{forbidden}$ over all databases $D_A$.
\item The certificate verification procedure (re-executing the firing sequence) is exactly
  the provenance witness checking procedure of \citet{Green2007}.
\end{enumerate}
\end{theorem}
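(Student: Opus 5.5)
The plan is to compute the why-provenance of every atom of $\Pi_H(D_A)$ explicitly by induction along the fixed-point iteration of $T_{\Pi_H}$. This gives an algebraic description of the certificates, and Parts (1)--(3) are then read off from it together with Theorem~\ref{thm:encoding-cap-to-dl}.

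\textbf{Step 1: closed form for why-provenance.} Fix $A$ and annotate each EDB fact $\mathsf{has}(a)$, $a\in A$, with $\{\{a\}\}$. Under the semiring $(2^{2^V},\cup,\bowtie,\emptyset,\{\emptyset\})$, a rule $\mathsf{has}(s_1)\wedge\cdots\wedge\mathsf{has}(s_k)\Rightarrow\mathsf{has}(v)$ contributes $\mathrm{Prov}(s_1)\bowtie\cdots\bowtie\mathrm{Prov}(s_k)$ to $\mathrm{Prov}(v)$, and alternative rules are combined with $\cup$. Iterating $T_{\Pi_H}$, and using that $\bowtie$ distributes over $\cup$ and that $2^{2^V}$ is finite, the annotations stabilise. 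I will show by induction on derivation depth, exactly mirroring the closure iteration $C_i$ used in the proof of Theorem~\ref{thm:encoding-cap-to-dl}(1), that
\[
\mathrm{Prov}(v)=\{W\subseteq A : W \text{ is the leaf set of some derivation tree of } \mathsf{has}(v)\}.
\]
Taking minimal elements, the minimal members of $\mathrm{Prov}(v)$ are exactly the minimal $W\subseteq A$ with $v\in\cl_H(W)$; this uses Part~(1) of Theorem~\ref{thm:encoding-cap-to-dl} applied to $D_W$. Since $\mathsf{forbidden}$ has the rules $\mathsf{has}(f)\Rightarrow\mathsf{forbidden}$, we get $\mathrm{Prov}(\mathsf{forbidden})=\bigcup_{f\in F}\mathrm{Prov}(f)$.

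\textbf{Step 2: the three parts.}

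For \emph{Part (1)}, a derivation certificate in the sense of Theorem~10.1 of \citet{Spera2026} is a firing sequence from a base set reaching a target. Unfolding the sequence gives a derivation tree whose leaves form a member of $\mathrm{Prov}$. Conversely, a topological ordering of a derivation tree gives a firing sequence. I will state this as a bijection between certificates and derivation trees, modulo reorderings of independent firings, and then project each tree to its leaf set.

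For \emph{Part (2)}, take $\bigcup_A \mathrm{Prov}_A(\mathsf{forbidden})$ and keep its minimal elements. By Step~1 these are the minimal $W$ with $\cl_H(W)\cap F\neq\emptyset$, that is, the minimal unsafe sets. By Theorem~\ref{thm:encoding-cap-to-dl}(3) this family is $\B(F)$. Minimality under $\subseteq$ coincides with the one-element-deletion minimality in the definition of $\B(F)$, because the unsafe family is an upper set (Theorem~9.4 of \citet{Spera2026}).

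For \emph{Part (3)}, re-executing a firing sequence means checking at each step that $S\subseteq C_i$. This is precisely the bottom-up check of \citet{Green2007}: verify each rule instance in the tree and confirm that the leaves lie in the EDB.

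\textbf{Main obstacle.} I expect the hard step to be Part (1). Certificates in \citet{Spera2026} are sequences, whereas provenance elements are sets, so an exact identification needs an explicit quotient: certificates are taken up to permutation of independent firings and removal of redundant firings. Only then does each certificate correspond to its leaf set. I will make this precise by defining the map from certificates to leaf sets and proving two things: the map is surjective onto $\mathrm{Prov}(\mathsf{forbidden})$, and it is injective on minimal, non-redundant certificates. I expect the claim to hold literally only with this normalisation, so I will build the normalisation into the statement of the proof.
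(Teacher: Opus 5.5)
Your Step~1 and your arguments for Parts~(2) and~(3) are sound. The plan for Part~(1), however, contains a step that would fail: the leaf-set map is \emph{not} injective on minimal, non-redundant certificates, under any normalisation by reordering and pruning. Take $A=\{a,b\}$ and hyperarcs $(\{a,b\},\{u_1\})$, $(\{u_1\},\{v\})$, $(\{a,b\},\{u_2\})$, $(\{u_2\},\{v\})$. The firing sequence through $u_1$ and the one through $u_2$ are not reorderings of one another. Neither contains a removable firing, and they have the same length. Both have leaf set $\{a,b\}$, which is the unique minimal witness for $v$. So the only normalisation that yields injectivity is ``identify certificates with equal leaf sets'', and that makes the claim vacuous.

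What ``exactly'' can mean is equality of images. The set of leaf sets of certificates for $v$ equals $\mathrm{Prov}(\mathsf{has}(v))$, and its minimal elements are $\Wp(\mathsf{has}(v))$. That is also all the paper's proof establishes. It sends each certificate to the EDB set it uses and each witness back to some derivation tree. Its ``the bijection is exact'' is therefore an equality of witness sets, not a bijection of firing sequences. Drop injectivity rather than try to prove it.

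Second, you aim the map at $\mathrm{Prov}(\mathsf{forbidden})$, following the statement's wording. But the certificates of Theorem~10.1 certify elements $v\in\Emg(A)\setminus F$. They are derivations of $\mathsf{has}(v)$ with $v\notin F$, never of $\mathsf{forbidden}$. If $A$ is safe then $\mathrm{Prov}_A(\mathsf{forbidden})=\emptyset$, while certificates exist whenever $\Emg(A)\setminus F\neq\emptyset$. So the identification as you set it up is false. The paper's proof of Part~(1) silently switches to $\Wp(\mathsf{has}(v))$. Your Step~1 already gives $\mathrm{Prov}(v)$ for every $v$, so retarget Part~(1) there.

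With these two corrections, your route is genuinely different from the paper's and cleaner. You compute Green's actual $\cup/\bowtie$ annotation, i.e.\ all leaf sets of derivation trees, and then minimise. The paper instead defines $\Wp$ as the minimal-witness basis and proves its Lemma by an induction whose step does not actually show that $W_1\cup\cdots\cup W_k$ is minimal. Your remark that $\subseteq$-minimality agrees with one-element-deletion minimality, via the upper-set property, also fills a step that the paper's Part~(2) skips.
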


\begin{proof}
We prove all three parts by structural induction on Datalog derivations.

\textbf{Setup.}
Fix $H=(V,\mathcal{F})$, $F\subseteq V$, and $A\subseteq V$.  Recall that in the
why-provenance semiring, the provenance $\Wp(q,\Pi,D)$ of a derived atom $q$ is the
set of all minimal subsets $W\subseteq D$ such that $q\in\Pi(W)$ \citep{Green2007}.
We write $\Wp(q)$ when $\Pi=\Pi_H$ and $D=D_A$ are clear from context.

\textbf{Lemma~(Why-provenance of capability atoms).}
For any $v\in V$, the why-provenance of $\mathsf{has}(v)$ in $\Pi_H$ over $D_A$ is:
\[
  \Wp(\mathsf{has}(v)) \;=\;
  \bigl\{W\subseteq A : W \text{ is a minimal subset with } v\in\cl_{H}(W)\bigr\}.
\]

\begin{proof}[Proof of Lemma]
By induction on the depth $d$ of the shortest derivation of $\mathsf{has}(v)$ in $\Pi_H$.

\emph{Base case, $d=0$.}  $\mathsf{has}(v)\in D_A$, so $v\in A$ and the only derivation
uses the EDB fact directly.  The unique minimal witness is $\{v\}$.  Indeed
$\cl_H(\{v\})\ni v$ and no strict subset of $\{v\}$ derives $v$, so
$\Wp(\mathsf{has}(v))=\{\{v\}\}=\{W\subseteq A:v\in\cl_H(W),|W|\text{ minimal}\}$.

\emph{Inductive step, $d>0$.}  $\mathsf{has}(v)$ is derived via a rule
$\mathsf{has}(s_1)\wedge\cdots\wedge\mathsf{has}(s_k)\Rightarrow\mathsf{has}(v)$
corresponding to hyperedge $(S,\{v\})\in\mathcal{F}$, $S=\{s_1,\ldots,s_k\}$.  Each
$\mathsf{has}(s_i)$ is derived at depth $<d$.  By the induction hypothesis,
$\Wp(\mathsf{has}(s_i))=\{W\subseteq A:W\text{ minimal with }s_i\in\cl_H(W)\}$ for
each $i$.  The why-provenance semiring computes:
\[
  \Wp(\mathsf{has}(v)) \;\supseteq\;
  \bigl\{W_1\cup\cdots\cup W_k : W_i\in\Wp(\mathsf{has}(s_i))\bigr\}^{\!\min},
\]
where $(\cdot)^{\min}$ denotes the antichain of inclusion-minimal sets.  A set
$W=W_1\cup\cdots\cup W_k$ with $W_i$ minimal for $s_i\in\cl_H(W_i)$ satisfies
$S\subseteq\cl_H(W)$ (since $s_i\in\cl_H(W_i)\subseteq\cl_H(W)$ by monotonicity), so
the hyperedge fires and $v\in\cl_H(W)$.  Minimality of $W$ as a witness for $v$: if we
remove any $w\in W$, then some $s_i\notin\cl_H(W\setminus\{w\})$ (by minimality of the
$W_i$), so the hyperedge cannot fire and $v\notin\cl_H(W\setminus\{w\})$ (or is derived
by a strictly longer derivation, but we are taking the antichain).  Therefore
$\Wp(\mathsf{has}(v))$ equals exactly the collection of minimal capability sets from
which $v$ is reachable via closure, which is what the lemma claims.
\end{proof}

\textbf{Part~(1): certificates are why-provenance witnesses.}

Theorem~10.1 of \citet{Spera2026} provides, for each $v\in\Emg(A)\setminus F$, a
\emph{derivation certificate}: the firing sequence $(e_1,\ldots,e_n)$ that produces $v$
from $A$.  Under the encoding, this firing sequence is a Datalog derivation tree for
$\mathsf{has}(v)$ over $D_A$.  The certificate records exactly the minimal EDB subset
used in the derivation, which by the Lemma is an element of $\Wp(\mathsf{has}(v))$.

Conversely, every element $W\in\Wp(\mathsf{has}(v))$ is a minimal $W\subseteq A$ with
$v\in\cl_H(W)$; the corresponding derivation tree constitutes a valid firing sequence
certificate.  The bijection is exact.

\textbf{Part~(2): $\B(F)$ = minimal elements of $\Wp(\mathsf{forbidden})$.}

Applying the Lemma to $\mathsf{forbidden}$: the why-provenance of $\mathsf{forbidden}$
in $\Pi_H$ is the set of minimal $W\subseteq V$ such that
$\mathsf{forbidden}\in\Pi_H(D_W)$.  By Theorem~\ref{thm:encoding-cap-to-dl}(3), these
are exactly the elements of $\B(F)$.  Since the why-provenance records only the
antichain of \emph{minimal} witnesses, and $\B(F)$ is itself an antichain (no element
contains another, by definition), the set of minimal elements of the why-provenance is
exactly $\B(F)$.

\textbf{Part~(3): certificate verification = provenance witness checking.}

The certificate verification procedure of \citet{Spera2026} re-executes the firing
sequence and checks that each step is valid.  In the Datalog setting, verifying a
provenance witness $W$ for query $q$ means checking that $q\in\Pi(D_W)$ (i.e.,
re-evaluating the Datalog program on the witness), which is exactly re-executing the
firing sequence.  The two procedures are identical under the encoding.
\end{proof}

\begin{corollary}[Certificate Algebra]
\label{cor:cert-algebra}
The derivation certificates of \citet{Spera2026} form a commutative semiring under:
$\oplus$ (certificate disjunction: either derivation path suffices) and $\otimes$
(certificate conjunction: all derivation paths required).  This gives certificates a
canonical algebraic structure enabling certificate compression, composition, and
validation in a uniform framework.
\end{corollary}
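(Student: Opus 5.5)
The plan is to identify certificates with elements of the why-provenance semiring $(2^{2^V},\cup,\bowtie,\emptyset,\{\emptyset\})$ and transport its algebraic structure to them. Concretely, a certificate for an atom $\mathsf{has}(v)$ (or $\mathsf{forbidden}$) will be an element $X\in 2^{2^V}$, a set of EDB witness sets, each certifying a derivation. By Theorem~\ref{thm:provenance}(1) every derivation certificate of \citet{Spera2026} is exactly such a witness, so the carrier is legitimate. I define $X\oplus Y:=X\cup Y$, meaning either derivation suffices, and $X\otimes Y:=X\bowtie Y=\{x\cup y:x\in X,\,y\in Y\}$, meaning both derivations are needed jointly. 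The neutral elements are $\mathbf{0}=\emptyset$ (no certificate) and $\mathbf{1}=\{\emptyset\}$ (the trivial certificate requiring no base facts).

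Next I verify the commutative semiring axioms directly. $(2^{2^V},\cup,\emptyset)$ is a commutative monoid, which is immediate. $\bowtie$ is commutative and associative because set union is, and $\{\emptyset\}$ is its unit since $x\cup\emptyset=x$. Annihilation holds because $\emptyset\bowtie Y=\emptyset$, as there are no pairs. Distributivity, $X\bowtie(Y\cup Z)=(X\bowtie Y)\cup(X\bowtie Z)$, follows by unfolding the set-builder definition. I will then check that these operations match the semantics. The Datalog rule $\mathsf{has}(s_1)\wedge\cdots\wedge\mathsf{has}(s_k)\Rightarrow\mathsf{has}(v)$ is annotated by $\Wp(\mathsf{has}(s_1))\otimes\cdots\otimes\Wp(\mathsf{has}(s_k))$, and alternative rules with the same head combine by $\oplus$. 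This is precisely the Green et al.\ annotation, so by the Lemma inside the proof of Theorem~\ref{thm:provenance} the semiring value coincides with the set of certificates up to minimisation.

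The main obstacle is minimisation. $\Wp$ was defined as an \emph{antichain} of minimal witnesses, whereas $\cup$ and $\bowtie$ do not preserve antichains. I would handle this first by working in the quotient semiring of antichains, with operations $X\oplus Y=(X\cup Y)^{\min}$ and $X\otimes Y=(X\bowtie Y)^{\min}$. The step to check is that $(\cdot)^{\min}$ is a semiring congruence. The key observation is that $X$ and $X^{\min}$ have the same upward closure, and both $\cup$ and $\bowtie$ commute with upward closure. Equivalently, the antichains form the free distributive lattice of monotone Boolean functions on $V$, which is a commutative semiring under $\vee,\wedge$.

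Given this, compression is the map $X\mapsto X^{\min}$. Composition is $\otimes$ along rule bodies and $\oplus$ across alternatives. Validation of any element reduces to checking each $W\in X$ by re-evaluating $\Pi_H(D_W)$, as in Theorem~\ref{thm:provenance}(3). These three facts give the claimed uniform framework.
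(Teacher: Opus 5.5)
Your proposal is correct and follows the paper's route: $\oplus$ and $\otimes$ are the $\cup$ and $\bowtie$ of the why-provenance semiring $(2^{2^V},\cup,\bowtie,\emptyset,\{\emptyset\})$, except that you verify the semiring axioms directly rather than citing \citet{Green2007}. Your extra congruence step, passing to antichains via $X\mapsto X^{\min}$ because $\cup$ and $\bowtie$ commute with upward closure, genuinely tightens the argument, since the paper's one-line proof overlooks that its own $\Wp$ is an antichain of minimal witnesses, which $\cup$ and $\bowtie$ do not preserve.
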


\begin{proof}
This follows from the general theory of provenance semirings \citep{Green2007}: any
collection of derivation witnesses inherits the semiring structure.  The $\oplus$ and
$\otimes$ operations correspond to the $\cup$ and $\bowtie$ of the why-provenance
semiring.
\end{proof}

%%==========================================================================
\section{The Witness Correspondence}
%%==========================================================================

\begin{theorem}[Witness Correspondence]
\label{thm:witness}
Let $q=(\Pi_H,\mathsf{forbidden})$ be the Datalog safety query.  Then:
\begin{enumerate}
\item $\B(F)$ equals the set of minimal witnesses for $q$ over $\{D_A:A\subseteq V\}$.
\item Enumerating $\B(F)$ is equivalent to enumerating minimal witnesses for $q$, which
  is output-polynomial in $|\B(F)|$ \citep{EiterGottlob1995}.
\item The online coalition safety check of Theorem~11.2 of \citet{Spera2026} is equivalent
  to the Datalog query ``does the EDB contain a superset of some minimal witness?'',
  decidable in $O(|\B(F)|\cdot|A|)$.
\end{enumerate}
\end{theorem}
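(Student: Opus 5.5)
Part~(1) is a restatement of Theorem~\ref{thm:encoding-cap-to-dl}(3), so I would simply invoke it. That result was proved by reducing both sides to Part~(2) of the same theorem: $B\notin\Safe(F)$ iff $\mathsf{forbidden}\in\Pi_H(D_B)$, applied to $B$ and to every $B\setminus\{b\}$. The only extra care is that ``minimal witness'' here is single-element-deletion minimality, while $\B(F)$ is inclusion-minimality in the upper set $\overline{\Safe(F)}$. These agree because $\Safe(F)$ is a lower set: if some proper subset $B'\subsetneq B$ were unsafe, then for any $b\in B\setminus B'$ the set $B\setminus\{b\}\supseteq B'$ would also be unsafe. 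So I would record this one-line monotonicity remark explicitly.

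For Part~(2), the equivalence of the two enumeration problems follows from Part~(1). The encodings of Definitions~\ref{def:encoding-cap-dl} and~\ref{def:encoding-dl-cap} are polynomial-time and, by Theorem~\ref{thm:encoding-dl-to-cap}, mutually inverse. Hence any enumerator for one problem runs on the translated instance with polynomial overhead and produces exactly the other's output, element by element. The output-polynomial claim would then be imported from \citet{EiterGottlob1995} through this translation.

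This import is the step I expect to be the main obstacle. Their output-polynomial results are for minimal transversals of an explicitly given hypergraph, whereas here $\B(F)$ is defined implicitly through the closure $\cl_H$. I would first try to express the minimal witnesses of $q$ as the minimal transversals of an explicitly computable hypergraph whose edges are complements of maximal safe sets, using the lower-set structure. I would then check that the incremental generation step of the dualisation framework reduces to polynomially many closure computations, each costing $O(n+mk)$. If that reduction only yields incremental quasi-polynomial time, I would state the result under that weaker bound and flag the discrepancy with the claim.

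For Part~(3), I would again use the lower-set property: $A\notin\Safe(F)$ iff $A\supseteq B$ for some $B\in\B(F)$. The forward direction takes a minimal unsafe subset of the unsafe set $A$, which exists by finiteness. The reverse direction is upward closure of unsafety. This is precisely the Datalog query ``does the EDB contain a superset of some minimal witness'' via Part~(1). For the cost bound, I would store $A$ in a hash set or bit-vector, which takes $O(|A|)$ preprocessing. For each $B\in\B(F)$ I would first compare $|B|$ with $|A|$ and reject immediately if $|B|>|A|$. Otherwise I test membership of the at most $|A|$ elements of $B$ in $A$, at $O(1)$ each, so each $B$ costs $O(|A|)$ and the total is $O(|\B(F)|\cdot|A|)$.
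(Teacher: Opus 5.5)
Your Parts (1) and (3) are sound and follow the paper's proof. In (3), rejecting every $B$ with $|B|>|A|$ early actually repairs the paper's step $\sum_{B\in\B(F)}|B|\le|\B(F)|\cdot|A|$, which fails without it.

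The gap is in Part (2). The paper's proof just cites \citet{EiterGottlob1995} for output-polynomial enumeration. Your suspicion is justified: output-polynomial generation of minimal transversals is exactly the open dualization problem. But neither your proposed repair nor your fallback can work. The hypergraph of complements of maximal safe sets is not polynomially bounded in $|\B(F)|$. Take $V=\{a_1,\dots,a_n,b_1,\dots,b_n,f\}$, hyperedges $(\{a_i,b_i\},\{f\})$ and $F=\{f\}$. Then $\B(F)=\{\{f\}\}\cup\{\{a_i,b_i\}:i\le n\}$ has $n+1$ elements, but there are $2^n$ maximal safe sets. Writing that hypergraph down, or jointly generating both families as in Fredman--Khachiyan, therefore costs time exponential in the input and in $|\B(F)|$. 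You get neither an output-polynomial nor even a quasi-polynomial bound in $|\B(F)|$, so as written Part (2) is unproved.

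The missing idea is to use the Horn structure directly, through a Lucchesi--Osborn-style exchange argument like the one used for enumerating candidate keys.
\begin{enumerate}
\item First dispose of the trivial case $\cl(\emptyset)\cap F\neq\emptyset$, where $\B(F)=\{\emptyset\}$.
\item Otherwise start with $\mathcal{K}=\{\{f\}:f\in F\}\subseteq\B(F)$.
\item Whenever $K\in\mathcal{K}$ and $(S,\{y\})\in\mathcal{F}$ with $y\in K$, the set $(K\setminus\{y\})\cup S$ is unsafe, since its closure contains $y$ and hence $\cl(K)$.
\item If that set contains no member of $\mathcal{K}$, shrink it by greedy single deletions, each of which is one closure test. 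The result lies in $\B(F)$ because $\Safe(F)$ is a lower set. Add it to $\mathcal{K}$.
\end{enumerate}

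At termination $\mathcal{K}=\B(F)$. Suppose $J\in\B(F)\setminus\mathcal{K}$. Since $\B(F)$ is an antichain, $J$ contains no member of $\mathcal{K}$, so you can choose $Z\supseteq J$ maximal among sets containing no member of $\mathcal{K}$. Then $Z\cap F=\emptyset$ but $\cl(Z)\cap F\neq\emptyset$. Hence some $(S,\{y\})\in\mathcal{F}$ has $S\subseteq Z$ and $y\notin Z$. Maximality of $Z$ gives some $K\in\mathcal{K}$ with $y\in K\subseteq Z\cup\{y\}$. So $(K\setminus\{y\})\cup S\subseteq Z$, and by the closure property of $\mathcal{K}$ this set contains a member of $\mathcal{K}$, which is a contradiction.

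For the running time, there are at most $|\B(F)|\cdot m$ candidate sets. Each costs $O(n)$ closure computations plus $O(|\B(F)|\cdot n)$ for the containment tests. This gives polynomial total time in $n+m+|\B(F)|$ and proves Part (2) with no appeal to dualization.
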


\begin{proof}
\textbf{Part~(1)} is exactly Theorem~\ref{thm:encoding-cap-to-dl}(3).

\textbf{Part~(2).}  By Part~(1), enumerating $\B(F)$ is the minimal witness enumeration
problem for the monotone Boolean query $(\Pi_H,\mathsf{forbidden})$.
\citet{EiterGottlob1995} showed that minimal witness (equivalently, minimal transversal)
enumeration for monotone Boolean queries is computable in output-polynomial time: an
algorithm that enumerates all minimal witnesses in time polynomial in $|V|+|\B(F)|$ exists.
(Specifically, for monotone Datalog queries, the minimal witnesses are the minimal
transversals of the set system induced by the query, and Eiter--Gottlob's Algorithm~A
runs in $O(|V|\cdot|\B(F)|)$ per witness.)

\textbf{Part~(3).}  The coalition safety check asks: does $\bigcup_i A_i\in\Safe(F)$?
By Theorem~11.2 of \citet{Spera2026}, this is equivalent to: does there exist $B\in\B(F)$
with $B\subseteq\bigcup_i A_i$?  Under the encoding, $\B(F)$ is the set of minimal
witnesses for $(\Pi_H,\mathsf{forbidden})$, so the check is: does the EDB $D_{\bigcup_i A_i}$
contain (as a subset) some minimal witness?  This is the Datalog query
``$\exists B\in\B(F): B\subseteq A$'', which requires checking each $B\in\B(F)$ against
$A$ in $O(|B|)$ time, giving total time $O(\sum_{B\in\B(F)}|B|)\leq O(|\B(F)|\cdot|A|)$.
\end{proof}

%%==========================================================================
\section{Open Problem Correspondence}
\label{sec:open}
%%==========================================================================

One of the most valuable consequences of the equivalence is that the open problems of
\citet{Spera2026} each map to known open problems in Datalog theory, establishing a
research agenda in which progress transfers immediately.

\begin{center}
\renewcommand{\arraystretch}{1.3}
\begin{tabular}{@{}p{5.8cm}p{8cm}@{}}
\toprule
\textbf{Open problem in \citealt{Spera2026}} & \textbf{Corresponding open problem in Datalog} \\
\midrule
Tighter PAC bounds for structured (sparse, low fan-in) hypergraph families &
  Tighter VC-dimension bounds for structured $\Datalog$ hypothesis classes;
  Rademacher complexity for Horn clause learning \citep{Dalmau2002,Cohen1995} \\[4pt]
Probabilistic closure under correlated hyperarc firing &
  Probabilistic Datalog with dependent tuples; expectation semiring extensions
  \citep{Green2007} \\[4pt]
$\mathsf{NP}$-hardness of \textsc{MinUnsafeAdd} for budget $b\geq 2$ &
  View update problem for Datalog: inserting rules to make a query true
  with minimal additions \\[4pt]
Approximate safe audit surfaces for large-scale deployments &
  Approximate query answering in Datalog; PAC-model counting for Horn queries \\[4pt]
Incremental maintenance of $\B(F)$ under dynamic hyperedge changes &
  Incremental view maintenance; DRed algorithm and extensions
  \citep{AbiteboulHullVianu1995} \\
\bottomrule
\end{tabular}
\end{center}

\paragraph{PAC learning tightness.}
The VC-dimension bound in Theorem~14.2 of \citet{Spera2026} ($O(n^{2k})$) follows
Sauer's lemma applied to the hypothesis class of $\Datalog$ programs with at most $m$
rules and maximum body size $k$.  The open question is whether Rademacher complexity
yields tighter bounds for \emph{structured} hypothesis classes (e.g., programs with a
fixed predicate dependency graph).  \citet{Dalmau2002} studied similar structured Datalog
learning problems in the context of constraint satisfaction; their techniques transfer
directly to the capability hypergraph setting.

\paragraph{Probabilistic Datalog safety.}
The path-independence assumption of Theorem~14.5 of \citet{Spera2026}---hyperarc
firings are independent given their tails---corresponds to tuple-independence in
probabilistic databases.  The exact computation of $r(v)=\Pr[v\in\cl_p(A)]$ under
general tuple dependence is an open problem in probabilistic Datalog
\citep{Green2007} corresponding to the \#P-hard case of inference in probabilistic Horn
programs.  The partial results of \citet{Spera2026} (polynomial time under independence)
correspond to the tractable case of tuple-independent probabilistic Datalog.

\paragraph{View update for adversarial robustness.}
\textsc{MinUnsafeAdd} (Theorem~14.7 of \citealt{Spera2026}) is NP-hard for $b\geq 2$.
Under the encoding, this is the \emph{view update problem for Datalog}: given a Datalog
program $\Pi_H$ and a target database $D_{A^*}$ that should derive $\mathsf{forbidden}$,
what is the minimum number of rules to add to make $D_{A^*}$ derive $\mathsf{forbidden}$?
The NP-hardness of the general case and polynomial tractability of the $b=1$ single-rule
case match known results in the Datalog view update literature.

%%==========================================================================
\section{Non-Compositionality as a Datalog Theorem}
%%==========================================================================

\begin{theorem}[Non-Compositionality as Datalog Non-Modularity]
\label{thm:noncomp-datalog}
Let $\Pi$ be a propositional Datalog program and $q$ a query predicate.  Define
$\mathcal{S}(q)=\{D:q\notin\Pi(D)\}$.  Then $\mathcal{S}(q)$ is not closed under union:
there exist $D_1,D_2\in\mathcal{S}(q)$ with $D_1\cup D_2\notin\mathcal{S}(q)$.
\end{theorem}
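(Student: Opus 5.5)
The plan is to read the statement as an existence claim about programs whose query has genuinely conjunctive support. Taken literally for \emph{every} $\Pi$ and $q$ it cannot hold. If $q$ is underivable, then $\mathcal{S}(q)$ contains every database and is trivially closed under union. If $\Pi=\{p\Rightarrow q\}$, then $\mathcal{S}(q)=\{D: p\notin D\}$, which is also closed under union. So I would prove the sharpened form: $\mathcal{S}(q)$ fails to be closed under union \emph{if and only if} $(\Pi,q)$ has a minimal witness $W$ with $|W|\geq 2$ (in the sense of the Datalog Query and Minimal Witness definition). I would then remark that this is exactly the situation produced by AND-hyperedges, and that it is what the intended statement asserts for the programs $\Pi_H$ of interest.

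\emph{Sufficiency.} Let $W$ be a minimal witness with $|W|\geq 2$ and pick $w\in W$. Set $D_1=D_{W\setminus\{w\}}$ and $D_2=D_{\{w\}}$.
\begin{itemize}
\item $D_1\in\mathcal{S}(q)$ directly by minimality of $W$.
\item For $D_2$, choose $w'\in W$ with $w'\neq w$, which exists since $|W|\geq 2$. Then $\{w\}\subseteq W\setminus\{w'\}$. Minimality gives $q\notin\Pi(D_{W\setminus\{w'\}})$. Monotonicity of $T_\Pi$ in the EDB (Step~1 of the proof of Theorem~\ref{thm:expressivity}) transfers this to the subset, so $q\notin\Pi(D_2)$.
\item Finally $D_1\cup D_2=D_W$, and $q\in\Pi(D_W)$ because $W$ is a witness.
\end{itemize}
Hence $D_1\cup D_2\notin\mathcal{S}(q)$. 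For a concrete instance I would give $\Pi=\{p_1\wedge p_2\Rightarrow q\}$ with $D_1=\{p_1\}$ and $D_2=\{p_2\}$. Via Theorem~\ref{thm:encoding-dl-to-cap}, this is the two-tail hyperedge witnessing non-compositionality in \citet{Spera2026}.

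\emph{Necessity.} Suppose every minimal witness is a singleton. Let $q\in\Pi(D_1\cup D_2)$. The EDB is finite, so we can shrink $D_1\cup D_2$ to a minimal witness $\{x\}\subseteq D_1\cup D_2$. Then $x$ lies in $D_1$ or in $D_2$, and monotonicity puts $q$ in $\Pi(D_1)$ or $\Pi(D_2)$. So $\mathcal{S}(q)$ is closed under union. Read through Theorem~\ref{thm:encoding-cap-to-dl}(3), this says non-compositionality occurs precisely when $\B(F)$ contains a set of size at least $2$.

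The only delicate step is choosing the hypothesis correctly. The existence of a rule with body size $\geq 2$ is not enough, since such a rule may be redundant or never fire toward $q$. The right condition is the existence of a non-singleton minimal witness, and the remaining steps are routine applications of monotonicity.
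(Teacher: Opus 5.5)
Your proof is correct, and it establishes more than the paper's proof does. The paper's proof simply fixes $\Pi=\{p_1\wedge p_2\Rightarrow q\}$ with $D_1=\{p_1\}$ and $D_2=\{p_2\}$. In effect it proves only the existential reading (some program exhibits non-closure), even though the statement is phrased for an arbitrary $\Pi$. Your counterexamples ($q$ underivable, or $\Pi=\{p\Rightarrow q\}$) correctly show that the universal reading is false, so no argument could deliver it. Your route instead characterises exactly when non-closure occurs: if and only if $(\Pi,q)$ has a minimal witness of size at least $2$. The paper's example is the special case $W=\{p_1,p_2\}$.

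What each approach buys:
\begin{itemize}
\item The paper's version is short and matches the three-node instance of \citet{Spera2026} directly.
\item Yours gives a structural criterion. Via Theorem~\ref{thm:encoding-cap-to-dl}(3), it says safety is non-compositional precisely when $\B(F)$ has a non-singleton element.
\item Your observation about redundant AND-rules (e.g.\ adding $p_1\Rightarrow q$ alongside $p_1\wedge p_2\Rightarrow q$) also shows that the claim in the Remark after the theorem, that ``any system with conjunctive rules'' is non-compositional, is too strong.
\end{itemize}

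One small tidy-up in the necessity direction. Rules with $k=0$ are allowed, so the negation of your hypothesis is ``every minimal witness has size at most $1$'', not ``every minimal witness is a singleton''. This includes the empty witness arising from a body-less rule $\Rightarrow q$. Your shrinking argument still covers that case, since then $q\in\Pi(D_1)$ by monotonicity.

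Likewise, if $q$ may itself occur in the EDB, then for $\Pi=\{p\Rightarrow q\}$ the safe family is $\{D : p\notin D,\ q\notin D\}$ rather than $\{D: p\notin D\}$. It is still closed under union, so your counterexample stands.
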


\begin{proof}
Let $\Pi$ contain the single rule $p_1\wedge p_2\Rightarrow q$.  Set $D_1=\{p_1\}$ and
$D_2=\{p_2\}$.  Then $q\notin\Pi(D_1)$ (the rule requires $p_2$, absent from $D_1$) and
$q\notin\Pi(D_2)$ (requires $p_1$, absent from $D_2$).  But $q\in\Pi(D_1\cup D_2)$ since
$\{p_1,p_2\}\subseteq D_1\cup D_2$ and the rule fires.  Therefore
$D_1,D_2\in\mathcal{S}(q)$ but $D_1\cup D_2\notin\mathcal{S}(q)$.

The non-compositionality of safety (Theorem~9.2 of \citealt{Spera2026}) is this
theorem under the encoding, with $p_1=\mathsf{has}(u_1)$, $p_2=\mathsf{has}(u_2)$,
and $q=\mathsf{has}(f)$ for $f\in F$.
\end{proof}

\begin{remark}
This perspective provides a structural explanation of non-compositionality: it is a
consequence of AND-rule semantics in propositional Datalog derivation systems.  The
capability hypergraph result identifies the minimal instance (three nodes, one rule),
but the phenomenon is a property of $\Datalog$ derivations in general.  Any system with
conjunctive rules will exhibit non-compositional safety.
\end{remark}

%%==========================================================================
\section{Empirical Grounding}
\label{sec:empirical}
%%==========================================================================

The results in this paper are purely theoretical; all guarantees hold for arbitrary
capability hypergraphs.  The following is illustrative only, re-expressing the empirical
study of \citet{Spera2026} in Datalog vocabulary to show the encoding is concrete.  Detailed derivation trees and the
aggregate statistics table appear in Appendix~\ref{app:empirical}; we summarise the key
correspondences here.  In the 12-capability Telco deployment, each agent session
corresponds to an EDB and a Datalog evaluation: 42.6\% of sessions produce a least model
containing at least one emergent atom ($\Emg(A)\neq\emptyset$), meaning conjunctive rules
are necessary for correct derivation in nearly half of real trajectories.  Each observed
AND-violation is an instance of Theorem~\ref{thm:noncomp-datalog}: two individually safe
EDB subsets whose union derives $\mathsf{forbidden}$ --- 38.2\% of conjunctive sessions
under the workflow planner.  The 0\% AND-violation rate of the hypergraph planner in this corpus is consistent
with the soundness guarantee of Theorem~\ref{thm:encoding-cap-to-dl}: correct $\Datalog$
evaluation never produces non-modularity violations by construction; the empirical
observation corroborates but does not prove the theorem for arbitrary deployments.  The 900 sessions thus provide independent
empirical instantiations of the correctness guarantee, illustrating the
theoretical transfer in observed agent behaviour.

%%==========================================================================
\section{Capability Safety Admits Efficient Incremental Maintenance}
\label{sec:audit-primary}
%%==========================================================================

The safe audit surface $G_F(A)$ of \citet{Spera2026} is the central computational
object of the capability hypergraph framework: it certifies every safely acquirable
capability, every capability one step from current reach, and every structurally
forbidden path.  In \citet{Spera2026}, computing $G_F(A)$ from scratch costs
$O(|V|\cdot(n+mk))$, and every change to the deployed hypergraph---a tool added, a
capability revoked, a new agent joining the coalition---requires full recomputation.

The Datalog identification established in Section~\ref{sec:encoding} enables an
improved maintenance algorithm for deployments where hyperedge changes are localised.
The key observation is that $G_F(A)$ is not merely a function of the
hypergraph---it is a \emph{Datalog view}: a derived relation computed by a fixed
stratified program $\Pi_{G_F}$ over the capability database $D_A$.  Datalog views admit
incremental maintenance under the DRed (Delete and Re-derive) algorithm
\citep{AbiteboulHullVianu1995}, whose per-update cost depends on the \emph{size of the
change}, not the size of the full program.  When changes are global---affecting a large
fraction of $V$---the advantage over na\"ive recomputation diminishes; the benefit is
greatest for the sparse, localised updates typical in enterprise deployments.

\begin{center}
\renewcommand{\arraystretch}{1.3}
\small
\begin{tabular}{@{}p{5.2cm}p{3.5cm}p{3.5cm}@{}}
\toprule
\textbf{Operation} & \textbf{Na\"ive recomp.} & \textbf{DRed (this paper)} \\
\midrule
Initial computation of $G_F(A)$
  & $O(|V|\cdot(n+mk))$ & $O(|V|\cdot(n+mk))$ \\[3pt]
Hyperedge insertion, active ($S\subseteq\cl(A)$)
  & $O(|V|\cdot(n+mk))$ & $O(|\Delta|\cdot(n+mk))$ \\[3pt]
Hyperedge insertion, lazy ($S\not\subseteq\cl(A)$)
  & $O(|V|\cdot(n+mk))$ & $O(|S|)$ \\[3pt]
Hyperedge deletion
  & $O(|V|\cdot(n+mk))$ & $O(|\Delta|\cdot(n+mk))$ \\[3pt]
Containment $G_F(A)\subseteq G_F(A')$
  & not known decidable & $\mathsf{P}$ \\
\bottomrule
\end{tabular}
\end{center}

\noindent
Here $|\Delta|$ denotes the number of derived atoms directly affected by the change.  For
sparse hypergraphs with localised updates---the typical case in enterprise deployments
where a single new API or tool is added---$|\Delta|\ll|V|$, and the maintenance cost is
negligible relative to recomputation.  The containment decidability result is entirely
new: it has no counterpart in the hypergraph framework of \citet{Spera2026}.

\begin{theorem}[Audit Surface as Datalog View --- Primary Result]
\label{thm:audit-view}
The safe goal discovery map $G_F(A)$ of \citet{Spera2026} is the result of evaluating a
fixed propositional Datalog program $\Pi_{G_F}$ (with stratified negation) over $D_A$.
Consequently:
\begin{enumerate}
\item $G_F(A)$ is maintainable under hyperedge insertions and deletions using the DRed
  algorithm in $O(|\Delta|\cdot(n+mk))$ per update.
\item $G_F(A)$ is optimisable using magic sets rewriting, reducing computation to
  capabilities reachable from the query predicate.
\item Query containment $G_F(A)\subseteq G_F(A')$ is decidable in polynomial time by
  $\Datalog$ query containment for propositional programs.
\end{enumerate}
\end{theorem}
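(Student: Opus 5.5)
The proof is constructive: write down $\Pi_{G_F}$ explicitly, and then treat parts~(1)--(3) as consequences of standard properties of stratified propositional programs. The base stratum is $\Pi_H$ from Definition~\ref{def:encoding-cap-dl}. It derives $\mathsf{has}(v)$, and by Theorem~\ref{thm:encoding-cap-to-dl}(1) these atoms are exactly $\cl_H(A)$.

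Alongside it we run a second copy of the rules restricted to singleton-tail hyperarcs, $\mathsf{has}_1(s)\Rightarrow\mathsf{has}_1(v)$, seeded by $\mathsf{in}(a)\Rightarrow\mathsf{has}_1(a)$. This copy computes $\cl_1(A)$ by the same argument as Theorem~\ref{thm:encoding-cap-to-dl}(1). The components of $G_F(A)$ are then defined in higher strata using negation.

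The emergent component is $\mathsf{has}(v)\wedge\neg\mathsf{in}(v)\wedge\neg\mathsf{has}_1(v)\wedge\neg\mathsf{forb}(v)\Rightarrow\mathsf{emg}(v)$. The near-miss frontier needs, for each hyperarc $e=(S,\{v\})$ and each $s\in S$, a rule $\neg\mathsf{has}(s)\wedge\bigwedge_{s'\in S\setminus\{s\}}\mathsf{has}(s')\Rightarrow\mathsf{miss}_e(s)$. To evaluate the safety clause $\cl(A\cup\{s\})\cap F=\emptyset$, and likewise the marginal gains $\gamma_F(v,A)$, we instantiate one copy $\Pi_H^{(u)}$ of the closure program per $u\in V$, seeded with $D_A\cup\{\mathsf{has}^{(u)}(u)\}$. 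Because everything is propositional and ground, this gives $|V|$ copies of size $O(n+mk)$, which matches the $O(|V|\cdot(n+mk))$ cost of computing $G_F(A)$ from scratch.

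Two points then need checking. First, the negation only ever applies to predicates from lower strata, so the program is stratified and its stratified model is unique (Remark on stratified negation). Second, each view predicate coincides with the corresponding set in Definition~\ref{def:emg-nmf}, which follows stratum by stratum from Theorem~\ref{thm:encoding-cap-to-dl}(1). The top-$k$ selection is not monotone, so it sits outside Datalog. I would treat it as post-processing on the exact derived relation $\mathsf{gain}^{(u)}(w)$ and state that convention explicitly.

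Part~(1) follows by applying DRed to this program. A hyperedge insertion or deletion changes one rule in each of the $|V|+2$ copies. DRed overdeletes and rederives only atoms that depend on the changed rule, which is at most $|\Delta|$ atoms per copy, and each rederivation is a worklist pass costing $O(n+mk)$. The bound $O(|\Delta|\cdot(n+mk))$ comes from charging this cost per affected atom.

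This step is the main obstacle. The naive accounting multiplies by the number of copies, so $|\Delta|$ has to be defined as the total number of affected atoms across all copies (equivalently, the number of $u$ whose closure actually changes). I would make that definition explicit so that the stated bound holds, and I would handle the lazy insertion case ($S\not\subseteq\cl(A)$) separately: there the cost is $O(|S|)$, because the new rule never fires.

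Part~(2) is immediate. Magic-sets rewriting preserves the stratified model for queries that are bound to specific view atoms, and in the propositional case it restricts evaluation to the atoms on which the queried atom depends.

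For part~(3), fix the program and compare $G_F(A)$ with $G_F(A')$ by evaluating both, each in polynomial time, and checking set inclusion componentwise. If instead containment is required uniformly over all $A$, I would reduce it to propositional containment of the view predicates. Over ground programs that problem is decidable in polynomial time for the positive parts, by comparing minimal witnesses per predicate.
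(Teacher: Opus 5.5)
Your construction is the paper's. It has the closure stratum $\Pi_H$, a singleton-closure layer, the $\mathsf{emg}$ rule and the all-but-one $\mathsf{miss}_e$ rules via negation of lower-stratum atoms, and one closure copy per candidate $u$ serving both the NMF safety filter and the gains. DRed, magic sets and containment are then read off as corollaries. Your Part~(3), by direct evaluation on $D_A$ and $D_{A'}$, is exactly what the paper's ``equivalently'' reduces to.

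In places you are more careful than the paper:
\begin{itemize}
\item Your $\mathsf{has}_1$ layer is seeded by $\mathsf{in}$ and propagates through $\mathsf{has}_1$. The paper's rule $\mathsf{has}(s)\Rightarrow\mathsf{has\_single}(v)$ reads the full closure, so it does not compute $\cl_1(A)$. For $A=\{a,b\}$ with arcs $(\{a,b\},\{c\})$ and $(\{c\},\{d\})$ it wrongly drops $d$ from $\Emg(A)$.
\item You notice that the $|V|+2$ copies multiply DRed's work, which the paper does not address.
\end{itemize}

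\textbf{The gap: the top-$k$ component.} The theorem says all of $G_F(A)$ is the stratified view. You move the third component out as post-processing because it is ``not monotone, so outside Datalog.'' That reason does not apply. $\Pi_{G_F}$ is allowed stratified negation, and your own $\mathsf{emg}$ and $\mathsf{miss}_e$ rules are already non-monotone. The real obstacle is aggregation, and over a fixed finite propositional vocabulary it unrolls into polynomially many stratified rules:
\begin{itemize}
\item For each candidate $u$, add $\mathsf{has}^{(u)}(w)\wedge\neg\mathsf{has}(w)\Rightarrow\mathsf{new}^{(u)}(w)$ for $w\notin F$.
\item Fix an order $w_1,\dots,w_n$. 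Take the atoms $c^{(u)}_{i,0}$ as facts and add $c^{(u)}_{i-1,j}\Rightarrow c^{(u)}_{i,j}$ and $c^{(u)}_{i-1,j-1}\wedge\mathsf{new}^{(u)}(w_i)\Rightarrow c^{(u)}_{i,j}$. Then $c^{(u)}_{n,j}$ holds iff $\gamma_F(u,A)\ge j$.
\item The conjunction $c^{(u)}_{n,j}\wedge\neg c^{(u')}_{n,j}$ witnesses $\gamma_F(u,A)>\gamma_F(u',A)$. With a fixed tie-break this gives a ``ranks above'' relation on the candidates ($\neg\mathsf{has}(u)$, $u\notin F$).
\item A second counting layer over that relation, plus one more negation, defines ``fewer than $k$ candidates rank above $u$.''
\end{itemize}
Every negation refers to a strictly lower stratum, so the top-$k$ lies inside $\Pi_{G_F}$. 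You therefore need neither your convention nor a separate maintenance argument for the post-processing step in Part~(1).

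\textbf{Side claims to drop.}
\begin{itemize}
\item The lazy-insertion bound $O(|S|)$ fails in your own construction. With $S\not\subseteq\cl(A)$ the new rule can still fire in a copy $\Pi_H^{(u)}$, changing $\gamma_F(u,A)$. If $|S\setminus\cl(A)|=1$, the arc enters $\partial(A)$ and may add an element to $\NMF_F(A)$.
\item The hedge for containment uniformly over all databases is not a polynomial procedure. Minimal witnesses can be exponentially many, and propositional Horn query containment is already coNP-hard. Put $\ell\Rightarrow c_j$ for each literal atom $\ell\in C_j$ and $c_1\wedge\cdots\wedge c_m\Rightarrow q$ on one side, and $x_i\wedge\bar x_i\Rightarrow q$ on the other; then non-containment is exactly satisfiability of the CNF. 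Only the fixed-$A,A'$ reading is polynomial, and that is the one you need.
\item ``The number of $u$ whose closure changes'' is not the same quantity as the total number of affected atoms across copies. Fix one definition of $|\Delta|$ and charge consistently.
\end{itemize}
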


\begin{proof}
We first construct $\Pi_{G_F}$ explicitly, then verify it captures all three components
of $G_F(A)=(\Emg(A)\setminus F,\;\NMF_F(A),\;\text{top-}k)$.

\textbf{Construction of $\Pi_{G_F}$.}

\emph{Base layer (stratum 0): capability closure.}
The capability closure rules are exactly $\Pi_H$ (Definition~\ref{def:encoding-cap-dl}).
This stratum derives $\mathsf{has}(v)$ for all $v\in\cl(A)$.

\emph{Stratum 1: singleton closure and emergent capabilities.}
Add rules $\mathsf{has}(s)\Rightarrow\mathsf{has\_single}(v)$ for each singleton-tail
hyperedge $(\{s\},\{v\})\in\mathcal{F}$, iterated to full singleton closure.  Then:
\[
  \mathsf{has}(v)\wedge\neg\mathsf{has\_single}(v)\wedge\neg\mathsf{in\_A}(v)\;\Rightarrow\;\mathsf{emergent}(v)
\]
where $\mathsf{in\_A}(v)$ is the EDB atom $\mathsf{has}(v)\in D_A$.  The condition
$v\in\cl(A)\setminus A$ and $v\notin\cl_1(A)$ captures $\Emg(A)$ exactly.
Adding $\neg\mathsf{forbidden}(v)$ (from the safety layer) restricts to $\Emg(A)\setminus F$.

\emph{Stratum 2: boundary detection and NMF.}
For each hyperedge $e=(S,\{v\})\in\mathcal{F}$ with $S=\{s_1,\ldots,s_k\}$, we must
identify whether exactly one element of $S$ is missing from $\cl(A)$.  We do this
without arithmetic by writing one rule per candidate missing atom.

For each $i\in\{1,\ldots,k\}$, introduce a predicate
$\mathsf{all\_except}(e,s_i)$ asserting that every element of $S$ other than $s_i$
is in $\cl(A)$:
\[
  \mathsf{has}(s_1)\wedge\cdots\wedge\mathsf{has}(s_{i-1})
  \wedge\mathsf{has}(s_{i+1})\wedge\cdots\wedge\mathsf{has}(s_k)
  \;\Rightarrow\;\mathsf{all\_except}(e,s_i).
\]
This is one rule per $(e,i)$ pair, giving at most $\sum_{e\in\mathcal{F}}|S(e)|=O(mk)$
rules in total across all hyperedges.  A hyperedge $e$ is on the boundary $\partial(A)$
with missing atom $s_i$ iff $\mathsf{all\_except}(e,s_i)$ holds \emph{and}
$\mathsf{has}(s_i)$ does \emph{not} hold:
\[
  \mathsf{all\_except}(e,s_i)\wedge\neg\mathsf{has}(s_i)
  \;\Rightarrow\;\mathsf{boundary\_miss}(e,s_i).
\]
We must further verify that $s_i$ is the \emph{unique} missing element — i.e., there
is no other $s_j$ ($j\neq i$) also absent.  This is guaranteed by the structure of
$\mathsf{all\_except}(e,s_i)$: that predicate fires only when all of
$s_1,\ldots,s_{i-1},s_{i+1},\ldots,s_k$ are present, so $s_i$ is the sole missing
element.  Therefore $\mathsf{boundary\_miss}(e,s_i)$ is derivable iff $e\in\partial(A)$
with $\mu(e)=\{s_i\}$, exactly as required.

\begin{remark}[Stratification depth]
The predicate $\mathsf{all\_except}(e,s_i)$ is computed from $\mathsf{has}(\cdot)$
(stratum~0) using only positive rules: stratum depth~0.
The predicate $\mathsf{boundary\_miss}(e,s_i)$ uses one application of
$\neg\mathsf{has}(s_i)$: stratum depth~1.
Adding $\neg\mathsf{forbidden}$ (derived in stratum~2 via the safety rules) requires a
further stratum for the NMF safety filter.  The full program is therefore stratified
with four strata (0: closure; 1: singleton closure + boundary detection; 2: emergent +
forbidden; 3: NMF + top-$k$), satisfying the standard stratification condition
\citep{AbiteboulHullVianu1995}.
\end{remark}

The NMF predicate is then:
\[
  \mathsf{boundary\_miss}(e,s)\wedge\neg\mathsf{forbidden\_if\_added}(s)
  \;\Rightarrow\;\mathsf{nmf}(s),
\]
where $\mathsf{forbidden\_if\_added}(s)$ is computed in stratum~3 by evaluating a
fresh copy of the closure rules on $D_A\cup\{\mathsf{has}(s)\}$ and checking whether
$\mathsf{forbidden}$ is derived.  Since $|V|$ is fixed, this is a finite program.

\emph{Stratum 3: marginal gain for top-$k$.}
The marginal gain $\gamma_F(v,A)=|\cl(A\cup\{v\})\setminus(\cl(A)\cup F)|$ requires
evaluating a closure for each $v\in V\setminus\cl(A)$.  This is expressible as a
bounded Datalog program (one fresh copy of the closure rules per candidate $v$), but
since $|V|$ is fixed, the entire stratum is a finite fixed program.  The top-$k$
computation selects the $k$ atoms $v$ with highest gain, expressible via iterated
join + selection in stratified Datalog.

\textbf{Verification that $\Pi_{G_F}(D_A)$ captures $G_F(A)$.}
By construction: $\mathsf{emergent}(v)\in\Pi_{G_F}(D_A)$ iff $v\in\Emg(A)\setminus F$;
$\mathsf{nmf}(s)\in\Pi_{G_F}(D_A)$ iff $s\in\NMF_F(A)$; and the top-$k$ selection
identifies the top capabilities by marginal gain.  All three components of $G_F(A)$
are captured.

\textbf{Part~(1): DRed incremental maintenance.}
$\Pi_{G_F}$ is a stratified Datalog program (finitely many strata, each a standard Datalog
program or one with simple stratified negation).  The DRed (Delete and Re-derive) algorithm
of \citet{AbiteboulHullVianu1995} provides incremental maintenance for such programs.
When a hyperedge $e=(S,\{v\})$ is added or deleted, the affected derivations are exactly
those involving $e$.  The number of re-derivations is $O(|\Delta|\cdot(n+mk))$ per update,
where $|\Delta|$ counts directly affected atoms.

\textbf{Part~(2): magic sets optimisation.}
Magic sets rewriting \citep{AbiteboulHullVianu1995} transforms a Datalog program into an
equivalent one that propagates query goals top-down, computing only the atoms relevant to
answering the query.  Applied to $\Pi_{G_F}$, this restricts the closure computation to
capabilities reachable from the NMF and emergent capability queries, potentially reducing
computation from $O(|V|\cdot(n+mk))$ to the subset of the hypergraph relevant to the
current configuration.

\textbf{Part~(3): query containment.}
Propositional Datalog query containment---deciding whether $q\in\Pi(D)$ implies $q\in\Pi'(D)$
for all databases $D$---is decidable in polynomial time for propositional (ground, monadic)
programs, since it reduces to checking a finite number of critical instances.
$G_F(A)\subseteq G_F(A')$ is equivalently: for all $v$,
$\mathsf{emergent}(v)\in\Pi_{G_F}(D_A)$ implies $\mathsf{emergent}(v)\in\Pi_{G_F}(D_{A'})$,
and similarly for NMF and top-$k$.  This is a finite collection of containment checks on
$\Pi_{G_F}$, decidable in polynomial time.
\end{proof}

\begin{theorem}[Locality Gap for Capability Safety Maintenance]
\label{thm:locality-gap}
Let $H=(V,\mathcal{F})$ be a capability hypergraph with $n=|V|$, $m=|\mathcal{F}|$,
maximum tail size $k$, and safe audit surface $G_F(A)$ as in
Definition~\ref{def:emg-nmf}.  We prove the following lower bound for capability
hypergraphs specifically.

\begin{enumerate}
\item \textbf{(Local incremental upper bound.)}
  If the update affects dependency cone $\Delta\subseteq V$, then $G_F(A)$ can be
  updated in $O(|\Delta|\cdot(n+mk))$ time via the Datalog view-maintenance formulation
  of Theorem~\ref{thm:audit-view}.

\item \textbf{(Strict separation.)}
  There exists an infinite family $\{H_n'\}_{n\geq 1}$ and single-hyperedge updates
  $\{u_n'\}$ such that $|\Delta_n|=O(1)$ but $|V_n|=n$, so incremental maintenance
  costs $O(n)$ while na\"ive recomputation costs $\Omega(n^2)$.  The gap is unbounded.

\item \textbf{(AND-inspection lower bound.)}
  Let $u=(S_u,\{v_u\})$ be the inserted hyperedge.  Any deterministic algorithm
  that correctly computes the updated $G_F(A)$ must inspect all inputs to the updated
  rule in the worst case: it must probe every atom in the \emph{update witness set}
  $\Phi(u)=S_u\cup\{v_u\}$.  Therefore:
  \[
    \text{\emph{(atom inspections to verify rule activation)}}
    \;\geq\; |S_u|+1 \;=\; k+1.
  \]
  DRed matches this lower bound for rule-activation verification under this model:
  it probes exactly $\Phi(u)$ at the firing frontier before propagating downstream.
\end{enumerate}
\end{theorem}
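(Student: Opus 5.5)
I will prove the three parts separately. Parts (1) and (2) are constructive upper bounds plus a counting argument, and Part (3) is an adversary argument in a probe model.

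\textbf{Part (1)} is essentially Theorem~\ref{thm:audit-view}(1), so I would invoke it and make the cost accounting explicit. Let $\Delta$ be the dependency cone of the update: the atoms of $\Pi_{G_F}$ whose derivations contain the changed rule, either directly or through a chain of rules. DRed first over-deletes every atom in $\Delta$ that has a derivation using the removed (or, for insertions, newly enabled) rule. It then re-derives each over-deleted atom by a worklist closure restricted to rules whose heads lie in $\Delta$. Each element of $\Delta$ triggers at most one bounded closure recomputation, namely the per-candidate copies used for $\mathsf{forbidden\_if\_added}$ and $\gamma_F$. Each such recomputation costs $O(n+mk)$ by the worklist bound in the Closure Operator definition. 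Summing over $\Delta$ gives $O(|\Delta|\cdot(n+mk))$. For a lazy insertion ($S_u\not\subseteq\cl(A)$) the rule cannot fire, so only the $O(|S_u|)$ boundary rules $\mathsf{all\_except}$ for $u$ need to be added.

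\textbf{Part (2).} I would build an explicit family. Take $H_n'$ with $V_n=\{a,x_1,\dots,x_{n-1}\}$, forbidden set $F=\emptyset$, $A=\{a\}$, and $O(n)$ singleton-tail arcs $(\{x_i\},\{x_{i+1}\})$ forming a chain disconnected from $a$. With these choices $m=\Theta(n)$, $k=1$ and $n+mk=\Theta(n)$. Naive recomputation of $G_F(A)$ evaluates $\gamma_F(v,A)$ for every $v\notin\cl(A)$. Each evaluation is a closure of cost $\Theta(n)$ in the worst case, because from $x_1$ the closure traverses the whole chain. That gives $\Omega(n^2)$ for recomputation. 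The update $u_n'$ inserts a hyperedge $(\{a\},\{b\})$ with a fresh isolated sink $b$ whose cone is $\{b\}$, so $|\Delta_n|=O(1)$. By Part (1) the incremental cost is $O(n)$, and the ratio is $\Omega(n)$, which is unbounded.

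The delicate point in Part (2) is the meaning of the $\Omega(n^2)$ claim. It must be a statement about the naive recomputation procedure as specified, not about all algorithms. I will state it that way and check that each $\gamma_F$ evaluation genuinely walks $\Theta(n)$ arcs for $\Theta(n)$ candidates.

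\textbf{Part (3)} is the main obstacle, because the oracle model has to be pinned down. I would fix a model in which the algorithm learns whether $\mathsf{has}(w)\in\cl(A)$ only by probing atom $w$, at unit cost. Suppose some deterministic algorithm outputs the updated $G_F(A)$ without probing some $w\in\Phi(u)$. I would then construct two instances that agree on every probed atom but differ at $w$.

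\begin{itemize}
\item If $w=s_i\in S_u$, instance $I_1$ has all of $S_u$ in the closure, so $u$ fires and $v_u$ becomes emergent. Instance $I_2$ is identical except $s_i\notin\cl(A)$, so $u$ does not fire and instead lands on the boundary with $\mu(u)=\{s_i\}$. The two outputs differ in both $\Emg$ and $\NMF$, yet the algorithm's view is the same, so it errs on one of them.
\item If $w=v_u$, instance $I_1$ has $v_u\in A$ already and instance $I_2$ does not. Then $v_u$ is not emergent in $I_1$ but is emergent in $I_2$, so the $\Emg(A)\setminus F$ component changes.
\end{itemize}

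The main care needed is that the pair of instances stays consistent with every other probe answer. I would ensure this by making each $s_i$ an isolated EDB atom, so that toggling it changes nothing else.

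Hence at least $|S_u|+1=k+1$ probes are necessary when $|S_u|=k$. Yao's principle extends the bound to randomized algorithms under the uniform distribution over these pairs. Finally, I would note that DRed's activation check probes exactly $\Phi(u)$, so the bound is matched.
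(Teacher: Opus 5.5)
Your proposal is correct. It shares the paper's architecture: a black-box appeal to Theorem~\ref{thm:audit-view}(1), an explicit hard family, and an indistinguishable-pair argument in a probe model. Both of your constructions differ from the paper's, however, and yours are the sounder ones.

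\textbf{Part~(2).} The paper's $H_n'$ has the AND-arc $(\{v_1,\ldots,v_{n-1}\},\{v_n\})$ fanning out to $x_1,\ldots,x_n$, with $A_n'=\{v_1,\ldots,v_{n-1}\}$, so $\cl(A_n')=V_n'$. The top-$k$ candidate set $V\setminus(\cl(A)\cup F)$ is therefore empty, and the paper's $\Omega(n^2)$ count only holds under the looser reading that naive recomputation runs a closure for every atom. Moreover, $m=\Theta(n)$ and $k=n-1$ make the stated factor $n+mk$ equal to $\Theta(n^2)$, so Part~(1) does not give $O(n)$ there. Your chain disconnected from $A=\{a\}$ has $n-1$ genuine candidates and $n+mk=\Theta(n)$, so both halves of the separation follow directly from the definitions. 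Two small corrections: sum the walks ($\gamma_F(x_i,A)$ traverses $n-1-i$ arcs) rather than claiming each evaluation costs $\Theta(n)$; and note that the honest DRed work is $\Theta(n)$, because the new rule enters every per-candidate closure copy, which is still within your $O(n)$.

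\textbf{Part~(3).} The paper separates a tail pair by adding or removing a deriving rule $r_{s_j}$, which a rule probe detects. It also uses a different configuration for each $p\in\Phi(u)$. By toggling an isolated $s_i$ in $A$ you keep the rule sets identical. Moreover, all your pairs can be based at the single instance $S_u\subseteq A$, $v_u\notin A$. State this explicitly, since that common base is what forces all $k+1$ probes on one input. To make the argument airtight:
\begin{itemize}
\item say that atom probes report the \emph{pre-update} closure (otherwise probing $v_u$ separates the tail pair);
\item give $v_u$ no outgoing arcs;
\item require $|S_u|\ge 2$ in the head case, since if $S_u=\{s_1\}\subseteq A$ then $v_u\in\cl_1(A)$ and is never emergent.
\end{itemize}
The tail case is fine regardless, because with $F=\emptyset$ the atom $s_i$ lies in $\NMF_F(A)$ exactly when $s_i\notin A$.

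\textbf{Part~(1).} Drop the lazy-insertion $O(|S_u|)$ aside. It is not part of the statement, and it fails in general: the new rule can fire inside the copies computing $\cl(A\cup\{v\})$ and change $\gamma_F(v,A)$.
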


\begin{proof}
\begin{remark}[Recomputation cost]
Na\"ive recomputation of $G_F(A)$ after any single-hyperedge update costs
$O(|V|\cdot(n+mk))$: the top-$k$ marginal-gain component requires evaluating a closure
for each of the $|V|$ candidate atoms.  On the chain family
$\mathcal{F}_n=\{(\{v_i\},\{v_{i+1}\})\}$, an insertion changes the gain of every
downstream node, so any algorithm outputting the full updated ranking reads $\Omega(n)$
entries each costing $\Omega(n+mk)$; this gives an output-sensitive lower bound of
$\Omega(n\cdot(n+mk))$ for na\"ive recomputation on this family.
\end{remark}

\textbf{Part~(1): Local incremental upper bound.}

This follows directly from Theorem~\ref{thm:audit-view}~Part~(1).  The DRed algorithm
maintains $\Pi_{G_F}$ under single-rule insertions and deletions (corresponding to
hyperedge updates) by re-deriving only atoms whose derivations pass through the changed
rule.  Formally, the \emph{dependency cone} of an update $u$ is:
\[
  \Delta(u) = \{a \in V : \text{some minimal derivation of }
  \mathsf{has}(a)\text{ in }\Pi_{G_F}\text{ uses rule }r_u\},
\]
where $r_u$ is the Datalog rule corresponding to the updated hyperedge.  DRed
re-derives exactly $\Delta(u)$, each at cost $O(n+mk)$, giving total update cost
$O(|\Delta(u)|\cdot(n+mk))$.  Since $|\Delta(u)|\leq|V|$, this is always at most the
cost of full recomputation, and strictly less whenever $|\Delta(u)|\ll|V|$.

\textbf{Part~(3): Strict separation (proof).}

We construct the family $\{H_n\}$ explicitly.

\emph{Construction.}
Let $V_n=\{v_1,\ldots,v_n,w\}$ for $n\geq 2$.  Define a single hyperedge
\[
  e_n=\bigl(\{v_1,\ldots,v_{n-1}\},\{v_n\}\bigr), \quad S_n=\{v_1,\ldots,v_{n-1}\}.
\]
Set the initial configuration $A_n=\{v_1,\ldots,v_{n-1}\}$,
forbidden set $F_n=\{w\}$, and $k=n-1$ (tail size of $e_n$).

\emph{Baseline audit surface.}
$\cl_{H_n}(A_n)=\{v_1,\ldots,v_n\}$ (all of $V_n\setminus\{w\}$, since $e_n$ fires
when $S_n\subseteq A_n$).  The top-$k$ marginal gains are:
$\gamma_{F_n}(w,A_n)=0$ (adding $w$ violates $F_n$) and $\gamma_{F_n}(v_i,A_n)=0$
for all $i\leq n-1$ (already in $A_n$).  So the top-$k$ list is a singleton
$\{(v_n,0)\}$... but note $v_n\in\cl(A_n)$, so the NMF is empty and the emergent set is
$\{v_n\}$.

\emph{The update.}  Insert the hyperedge $u_n=(\{v_n\},\{w'\})$ for a fresh node
$w'\notin V_n$, with $w'\notin F_n$.  After the update:
$V_n\leftarrow V_n\cup\{w'\}$, $\cl(A_n)=\{v_1,\ldots,v_n,w'\}$, and $w'$ is now emergent
(reachable via $e_n$ then $u_n$, not via any singleton chain from $A_n$).

\emph{Dependency cone.}  The rule added to $\Pi_{G_F}$ is
$r_{u_n}:\mathsf{has}(v_n)\Rightarrow\mathsf{has}(w')$.
The only atom whose derivation uses $r_{u_n}$ is $\mathsf{has}(w')$.  Therefore:
\[
  \Delta_n = \{w'\}, \quad |\Delta_n| = 1 = O(1).
\]

\emph{Incremental cost.}
DRed re-derives only $\mathsf{has}(w')$, which requires one rule firing and a closure
evaluation of cost $O(n+mk)=O(n+n)=O(n)$ for this family (since $m=2$, $k=n-1$).
Total update cost: $O(1)\cdot O(n)=O(n)$.

\emph{Na\"ive recomputation cost.}
After the update, na\"ive recomputation of $G_F(A_n)$ must:
(i)~recompute $\cl(A_n)$ at cost $O(n+mk)=O(n)$;
(ii)~recompute $\gamma_F(v,A_n)$ for each $v\in V_n\cup\{w'\}\setminus\cl(A_n)$;
since $\cl(A_n)=V_n\cup\{w'\}\setminus\{w\}$, the only candidate is $v=w$, costing $O(n)$;
(iii)~recompute the NMF by checking all $m=2$ hyperedges at cost $O(mk)=O(n)$;
(iv)~recompute the top-$k$ marginal gain table, which requires evaluating
$\gamma_F(v,A_n)$ for each of the $n$ nodes $v\in\{v_1,\ldots,v_n,w'\}\setminus A_n
=\{v_n,w'\}$... 

The strict $\Omega(n^2)$ separation for the top-$k$ component requires a richer family.
Replace $H_n$ with $H_n'$: $V_n'=\{v_1,\ldots,v_n\}\cup\{x_1,\ldots,x_n\}$,
$A_n'=\{v_1,\ldots,v_{n-1}\}$, $F_n'=\emptyset$, and hyperedges
$e_n=(\{v_1,\ldots,v_{n-1}\},\{v_n\})$ plus $f_i=(\{v_n\},\{x_i\})$ for $i=1,\ldots,n$.
Before update: $\cl(A_n')=\{v_1,\ldots,v_n,x_1,\ldots,x_n\}$ and the marginal gains
$\gamma_{F_n'}(x_j,A_n')$ for $x_j\notin A_n'$ each require a separate closure
evaluation at cost $O(|V_n'|)=O(n)$, with $n$ such candidates, totalling $O(n^2)$.

Insert $u_n'=(\{v_{n-1}\},\{v_n'\})$ for fresh $v_n'\notin V_n'$: now
$\Delta_n'=\{v_n'\}$ ($|\Delta_n'|=1$, same as before) while na\"ive recomputation
must re-evaluate all $n$ marginal-gain closures, each $O(n)$, giving $\Omega(n^2)$.

\emph{The asymptotic gap.}
Incremental maintenance costs $O(|\Delta_n'|\cdot|V_n'|)=O(n)$.
Na\"ive recomputation costs $\Omega(n^2)$.  The ratio is $\Omega(n)$, growing without
bound.  For all $n\geq 1$ the family $\{H_n'\}$ witnesses the strict separation claimed
in Part~(3).

\textbf{Part~(4): AND-inspection lower bound (proof).}

We prove that any correct incremental algorithm must probe every atom in the
\emph{update witness set} $\Phi(u) = S_u \cup \{v_u\}$ of the updated hyperedge
$u=(S_u,\{v_u\})$.  Since $|\Phi(u)| = |S_u|+1 = k+1$ where $k=|S_u|$ is the tail
size, this gives an $\Omega(k)$ lower bound on probes per update.

\medskip
\noindent\textbf{Step 1: Computation model.}

Algorithm $\mathcal{A}$ maintains $G_F(A)$ under hyperedge updates.  It operates in
the \emph{hypergraph oracle model}: after receiving update $u$, $\mathcal{A}$ may issue
unit-cost probes of two types:

\begin{itemize}
\item \textbf{Atom probe} $\mathsf{probe}(a)$: returns
  $(\mathsf{inA}(a),\;\mathsf{inCl}(a))$ where $\mathsf{inA}(a)=\mathbf{1}[a\in A]$
  and $\mathsf{inCl}(a)=\mathbf{1}[a\in\cl_{\mathcal{F}}(A)]$ (derivability status
  under the current rule set, \emph{before} the update).
\item \textbf{Rule probe} $\mathsf{probe}(e)$: returns whether $e\in\mathcal{F}$,
  and if so its tail $S(e)$ and head $\{v(e)\}$.
\end{itemize}

The algorithm's \emph{only} access to the instance is through these probes; it has no
other channel to the hypergraph or the closure state.  A certificate or index computed
in a prior round counts as zero cost only if it was itself built from probes whose cost
is charged to that round.  Crucially, any cached structure not built from a probe of
$p$ cannot distinguish $\mathcal{I}^+_p$ from $\mathcal{I}^-_p$
(Lemma~\ref{lem:indistinguishable}), so it cannot guide a correct update on both
instances simultaneously.  $\mathcal{A}$ is \emph{correct} if it always outputs the
exact $G_F(A)$ after each update.  We say $\mathcal{A}$ is \emph{$\Phi$-avoiding} if
after receiving $u$ it outputs $G_F(A)$ without probing some $p\in\Phi(u)=S_u\cup\{v_u\}$.
We prove that no $\Phi$-avoiding algorithm is correct.

\smallskip
\noindent\emph{Why this model is natural.}
Atom probes are the natural unit of incremental maintenance: any algorithm that updates
$G_F(A)$ must at minimum determine the current derivation status of atoms touched by the
update.  Rule probes model reading hyperedge definitions to determine which rules are
active.  The model is not weaker than reality: a real implementation may use hash tables
or indices, but any such structure was itself built by reading atoms, and the cost of
building it is charged to the round in which those reads occurred.  Specifically, the
``only access'' clause in the model handles precomputed certificates: a certificate that
was built in a prior round already paid for its probes then; if it was built without
probes, it must still distinguish our paired instances (which differ only at $p$) to
correctly guide the current update --- and distinguishing them requires probing $p$.
The oracle model therefore captures the minimal information any correct maintenance
algorithm must acquire, regardless of data structure or implementation strategy.

\begin{lemma}[Indistinguishability]
\label{lem:indistinguishable}
For any update $u=(S_u,\{v_u\})$ and any $p\in\Phi(u)=S_u\cup\{v_u\}$, there exist
two hypergraph instances $\mathcal{I}^+_p$ and $\mathcal{I}^-_p$ such that:
\begin{enumerate}[label=(\roman*)]
\item every atom probe of $b\neq p$ and every rule probe returns the same answer on both;
\item $\mathsf{probe}(p)$ returns different answers on the two instances; and
\item the correct output $G_F(A)$ differs between the two instances.
\end{enumerate}
\end{lemma}

\begin{proof}[Proof of Lemma~\ref{lem:indistinguishable}]
Fix $u=(S_u,\{v_u\})$ with $S_u=\{s_1,\ldots,s_k\}$.  We construct one pair per atom.

\medskip
\noindent\emph{Pairs for tail atoms $s_j \in S_u$.}
Fix $s_j\in S_u$.  Choose fresh nodes $x_1,\ldots,x_{k'}$ with $k'\geq 1$.  Construct:
\begin{align*}
H^+_{s_j} &= \bigl\{r_{s_j}:(\{x_1,\ldots,x_{k'}\},\{s_j\})\bigr\}
             \cup \mathcal{F}_0,\\
H^-_{s_j} &= \mathcal{F}_0,
\end{align*}
where $\mathcal{F}_0$ is a base rule set containing $u$ and all rules not involving
$s_j$, and with $A=\{x_1,\ldots,x_{k'}\}\cup(S_u\setminus\{s_j\})$.

In $H^+_{s_j}$: $s_j\in\cl(A)$ (via $r_{s_j}$), all of $S_u\subseteq\cl(A)$, so
$r_u$ fires and $v_u\in\cl_{H^+_{s_j}\cup\{u\}}(A)$.
In $H^-_{s_j}$: $s_j\notin\cl(A)$ (no rule derives it, $s_j\notin A$), so $r_u$
cannot fire and $v_u\notin\cl_{H^-_{s_j}\cup\{u\}}(A)$.

\emph{Observable difference.}  $\mathsf{probe}(s_j)$ returns $\mathsf{inCl}(s_j)=1$
on $H^+_{s_j}$ and $\mathsf{inCl}(s_j)=0$ on $H^-_{s_j}$.

\emph{Every other probe returns the same answer.}  Both instances share $\mathcal{F}_0$
(so all rule probes for $e\neq r_{s_j}$ agree), and $r_{s_j}$ is probed directly ---
but $r_{s_j}$ has head $s_j$, so learning its presence or absence tells the algorithm
exactly $\mathsf{inCl}(s_j)$.  For all atoms $b\neq s_j$: $b\in\cl(A)$ iff
$b\in\cl(A)$ in both instances (derivations of $b$ do not use $r_{s_j}$, since $r_{s_j}$
is the only rule with head $s_j$ and no rule has $s_j$ in its body in $\mathcal{F}_0$).
Therefore every probe of $b\neq s_j$ returns the same answer on both instances.

\medskip
\noindent\emph{Pair for the head atom $v_u$.}
Construct:
\begin{align*}
H^+_{v_u} &= \mathcal{F}_0 \quad\text{with}\quad A^+ = A,\\
H^-_{v_u} &= \mathcal{F}_0 \quad\text{with}\quad A^- = A\cup\{v_u\},
\end{align*}
where $\mathcal{F}_0$ contains $u$ and all of $S_u\subseteq\cl(A)$ is satisfied.

In $H^+_{v_u}$: $v_u\notin A$, $v_u\in\cl_{\mathcal{F}_0\cup\{u\}}(A)$ (via $r_u$),
and no singleton path reaches $v_u$ (choose $\mathcal{F}_0$ without singleton arcs to
$v_u$), so $v_u\in\Emg(A)$.

In $H^-_{v_u}$: $v_u\in A^-$, so $v_u\notin\Emg(A^-)$ by
Definition~\ref{def:emg-nmf}.

\emph{Observable difference.}  $\mathsf{probe}(v_u)$ returns $\mathsf{inA}(v_u)=0$
on $\mathcal{I}^+_{v_u}$ and $\mathsf{inA}(v_u)=1$ on $\mathcal{I}^-_{v_u}$.

\emph{Every other probe returns the same answer.}  Both instances have the same
$\mathcal{F}_0\cup\{u\}$.  For all $b\neq v_u$: $\mathsf{inA}(b)$ is the same
($A^-=A\cup\{v_u\}$) and $\mathsf{inCl}(b)$ is the same (adding $v_u$ to $A$ only
affects atoms reachable from $v_u$, which we ensure have no outgoing rules in
$\mathcal{F}_0$).

\end{proof}

\medskip
\noindent\textbf{Step 3: Differing correct outputs.}

For each pair, the correct outputs differ:

\begin{itemize}
\item \emph{Tail atom $s_j$:}  $v_u\in\cl(A)$ after $u$ on $H^+_{s_j}$ (rule fires)
  but $v_u\notin\cl(A)$ on $H^-_{s_j}$ (rule blocked).  If $v_u\notin F$, this
  changes the emergent-capability set; if $v_u\in F$, it changes safety status.
  Either way $G_F$ differs.

\item \emph{Head atom $v_u$:}  $v_u\in\Emg(A)$ on $\mathcal{I}^+_{v_u}$ but
  $v_u\notin\Emg(A^-)$ on $\mathcal{I}^-_{v_u}$.  The emergent sets differ.
\end{itemize}

\medskip
\noindent\textbf{Step 4: Any $\Phi$-avoiding algorithm fails.}

Let $\mathcal{A}$ be any deterministic algorithm that skips $\mathsf{probe}(p)$ for
some $p\in\Phi(u)$.

\emph{Case $p=s_j\in S_u$:}
By Lemma~\ref{lem:indistinguishable}(i), every probe of $b\neq s_j$ and every rule
probe returns the same answer on $\mathcal{I}^+_{s_j}$ and $\mathcal{I}^-_{s_j}$.
Since $\mathcal{A}$ is $\Phi$-avoiding it never issues $\mathsf{probe}(s_j)$.
Therefore every probe $\mathcal{A}$ issues receives the same response on both instances
throughout its entire execution.  By determinism $\mathcal{A}$ produces identical
output on $\mathcal{I}^+_{s_j}$ and $\mathcal{I}^-_{s_j}$.
By Lemma~\ref{lem:indistinguishable}(iii), the correct outputs differ.
Therefore $\mathcal{A}$ errs on at least one of the two instances.

\emph{Case $p=v_u$:}
By Lemma~\ref{lem:indistinguishable}(i), every probe of $b\neq v_u$ and every rule
probe returns the same answer on $\mathcal{I}^+_{v_u}$ and $\mathcal{I}^-_{v_u}$.
Since $\mathcal{A}$ never issues $\mathsf{probe}(v_u)$, every probe it issues returns
the same response on both instances.  By determinism $\mathcal{A}$ produces identical
output on both.  By Lemma~\ref{lem:indistinguishable}(iii), the correct outputs differ.
Therefore $\mathcal{A}$ errs on at least one of the two instances.

Since $\mathcal{A}$ fails for every skipped $p\in\Phi(u)$, any correct algorithm must
probe every atom in $\Phi(u)=S_u\cup\{v_u\}$.  This argument is an instance of
Yao's minimax principle \citep{Yao1977}: the lower bound is witnessed by a
distribution over inputs (the paired instances $\mathcal{I}^\pm_p$) on which no
deterministic algorithm can succeed without probing $p$.  Concretely:
\[
  \text{(probes of any correct algorithm after update $u$)}
  \;\geq\; |S_u|+1 \;=\; k+1.
\]

\medskip
\noindent\textbf{Consequence and connection to AND-semantics.}

The lower bound $\Omega(k)$ follows from the AND-condition structure of capability
hyperedges: to determine whether $r_u$ fires, an algorithm must check all $k$
preconditions in $S_u$.  Systems with only singleton-tail rules need $O(1)$ probes per
update; AND-rules require $\Omega(k)$ probes for rule-activation verification.

DRed probes exactly $\Phi(u)$ plus downstream effects:
\[
  \text{(work of DRed)} = O\!\bigl((k+|\Delta(u)|)\cdot(n+mk)\bigr).
\]
DRed matches the lower bound for verifying rule activation; the downstream
propagation cost $O(|\Delta(u)|\cdot(n+mk))$ is separately justified by Part~(2).
\end{proof}

\begin{remark}[Why the lower bound requires AND-semantics]
\label{rem:gap-requires-datalog}
The AND-inspection lower bound (Part~4) follows from the conjunctive
structure of capability hyperedges.  An algorithm must probe all $k$ tail atoms
$S_u=\{s_1,\ldots,s_k\}$ to determine whether $r_u$ fires: if it skips any $s_j$, we
construct an instance where $s_j\notin\cl(A)$ (so $r_u$ is blocked) that is
observationally identical to the original.  An algorithm must also probe $v_u$ to
determine whether $r_u$ changes anything: if $v_u\in A$ already, the update has no
effect on $G_F(A)$.

This $\Omega(k)$ bound is a structural property of AND-rules that has no analogue in
singleton-rule (OR) systems: for a system with only singleton-tail hyperedges, an
update $u=(\{s\},\{v\})$ requires only 2 probes ($s$ and $v$) regardless of graph size.
The AND-condition structure is precisely what makes safety non-compositional
(Theorem~\ref{thm:noncomp-datalog}), and it is what forces the inspection lower bound.

The Locality Gap Theorem as a whole cannot be stated or proved within the hypergraph
framework of \citet{Spera2026} alone.  The upper bound in Part~(2) is the DRed
guarantee, which exists only because $\Pi_{G_F}$ is a Datalog view
(Theorem~\ref{thm:audit-view}).  The lower bound in Part~(4) uses the oracle model and
explicit instance pairs to prove that AND-precondition inspection is unavoidable.
Together they characterise the update complexity of capability safety maintenance.
\end{remark}

%%==========================================================================
\section{Discussion and Future Work}
%%==========================================================================

\subsection{What the Equivalence Settles}

\paragraph{Expressivity.}
The capability hypergraph framework captures exactly $\Datalog$---no more, no less.
It can express any monotone Boolean function of the initial capability set, but cannot
express non-monotone queries (counting, parity, threshold with non-trivial lower bounds)
or queries requiring arithmetic.  This is a precise, model-theoretic characterisation of
the framework's scope.

\paragraph{Optimality of the closure algorithm and the maintenance gap.}
The $O(n+mk)$ worklist of \citet{Spera2026} is essentially optimal for $\Datalog$
evaluation, matching the linear-time lower bound for Horn clause forward chaining.
Theorem~\ref{thm:audit-view}(2) shows that magic sets rewriting can improve constants
in practice, but the asymptotic bound is tight.  The Locality Gap
Theorem~\ref{thm:locality-gap} establishes a strictly stronger result: for the
\emph{maintenance} problem, the Datalog view formulation achieves an $\Omega(n)$
asymptotic advantage over na\"ive recomputation on the explicit family $\{H_n'\}$,
with the gap growing without bound.  This separation is the first formal evidence that
the Datalog identification is not merely a representational convenience but enables
algorithmically superior procedures.

\paragraph{Structural source of $\mathsf{coNP}$-hardness.}
The $\mathsf{coNP}$-hardness of computing $\B(F)$ is not an accident of the safety
framing---it is the $\mathsf{coNP}$-hardness of minimal witness enumeration for monotone
Boolean queries, a fundamental result in database theory
\citep{EiterGottlob1995}.  This gives the hardness result a deeper explanation.

\subsection{What the Equivalence Opens}

\paragraph{Probabilistic safety.}
Probabilistic $\Datalog$ \citep{Green2007} provides the framework for extending capability
safety to stochastic tool invocations.  The path-independence assumption of
\citet{Spera2026} corresponds to tuple-independence in probabilistic databases, and the
known tractability results for tuple-independent probabilistic Datalog directly characterise
when probabilistic safety can be computed efficiently.

\paragraph{Incremental maintenance.}
The DRed algorithm and its successors provide provably correct incremental view maintenance,
extending the dynamic hypergraph theorems of \citet{Spera2026} with formal guarantees on
the number of re-derivations---a quantity not previously bounded by the capability framework.

\paragraph{Learning.}
$\Datalog$ learning theory \citep{Dalmau2002} provides tighter sample complexity bounds
for structured hypothesis classes, directly addressing the loose PAC bound of
\citet{Spera2026}.  In particular, structured hypergraph families (low fan-in, sparse,
tree-structured dependency) correspond to structured $\Datalog$ programs whose Rademacher
complexity is substantially smaller than the Sauer--Shelah bound.

\paragraph{Distributed safety.}
Distributed $\Datalog$ evaluation provides a framework for extending coalition safety
checking to systems where capabilities are distributed across nodes and no single node
has global visibility---a setting not addressed by \citet{Spera2026}.

\subsection{Limitations}

The equivalence established in this paper is for \emph{propositional, monotone}
capability systems: hyperedges have positive preconditions, capability sets are
finite, and safety is defined by a fixed forbidden set.  Extensions to non-monotone
settings (capabilities that can be revoked), probabilistic capabilities (hyperarc
firing with uncertainty), or higher-order interactions (capabilities whose semantics
depend on context) are not captured directly by the $\Datalog$ identification and
require further study.  The lower bound results (Theorem~\ref{thm:locality-gap}
Part~4) apply within the oracle model defined in Section~\ref{sec:audit-primary}; a
general-purpose RAM lower bound for full incremental maintenance remains open.

%%==========================================================================
\bibliographystyle{plainnat}

\begin{thebibliography}{99}

\bibitem[Abiteboul et al.(1995)]{AbiteboulHullVianu1995}
S.~Abiteboul, R.~Hull, and V.~Vianu.
\newblock \emph{Foundations of Databases}.
\newblock Addison-Wesley, 1995.

\bibitem[Ceri et al.(1989)]{Ceri1989}
S.~Ceri, G.~Gottlob, and L.~Tanca.
\newblock What you always wanted to know about Datalog (and never dared to ask).
\newblock \emph{IEEE Transactions on Knowledge and Data Engineering},
  1(1):146--166, 1989.

\bibitem[Dalmau et al.(2002)]{Dalmau2002}
V.~Dalmau, P.~G. Kolaitis, and M.~Y. Vardi.
\newblock Constraint satisfaction, bounded treewidth, and finite-variable logics.
\newblock In \emph{CP~2002}, pages 310--326, 2002.

\bibitem[Eiter and Gottlob(1995)]{EiterGottlob1995}
T.~Eiter and G.~Gottlob.
\newblock Identifying the minimal transversals of a hypergraph and related
  problems.
\newblock \emph{SIAM Journal on Computing}, 24(6):1278--1304, 1995.

\bibitem[Green et al.(2007)]{Green2007}
T.~J. Green, G.~Karvounarakis, and V.~Tannen.
\newblock Provenance semirings.
\newblock In \emph{Proceedings of PODS~2007}, pages 31--40, 2007.

\bibitem[Immerman(1986)]{Immerman1986}
N.~Immerman.
\newblock Relational queries computable in polynomial time.
\newblock \emph{Information and Control}, 68(1--3):86--104, 1986.

\bibitem[Spera(2026)]{Spera2026}
C.~Spera.
\newblock Safety is non-compositional: A formal framework for capability-based
  AI systems.
\newblock \emph{arXiv preprint arXiv:2603.15973}, March 2026.

\bibitem[van Emden and Kowalski(1976)]{vanEmden1976}
M.~H. van~Emden and R.~A. Kowalski.
\newblock The semantics of predicate logic as a programming language.
\newblock \emph{Journal of the ACM}, 23(4):733--742, 1976.

\bibitem[Benveniste et al.(2018)]{Benveniste2018}
A.~Benveniste, B.~Caillaud, D.~Nickovic, et~al.
\newblock Contracts for system design.
\newblock \emph{Foundations and Trends in Electronic Design Automation},
  12(2--3):124--400, 2018.

\bibitem[Cohen(1995)]{Cohen1995}
W.~W. Cohen.
\newblock Pac-learning non-recursive Prolog clauses.
\newblock \emph{Artificial Intelligence}, 79(1):1--38, 1995.

\bibitem[Erol et al.(1994)]{Erol1994}
K.~Erol, J.~Hendler, and D.~S. Nau.
\newblock HTN planning: Complexity and expressivity.
\newblock In \emph{AAAI-94}, pages 1123--1128, 1994.

\bibitem[Fagin(1974)]{Fagin1974}
R.~Fagin.
\newblock Generalized first-order spectra and polynomial-time recognizable sets.
\newblock In R.~Karp, editor, \emph{Complexity of Computation}, SIAM--AMS
  Proceedings, pages 43--74. American Mathematical Society, 1974.

\bibitem[Jones(1983)]{Jones1983}
C.~B. Jones.
\newblock Tentative steps toward a development method for interfering programs.
\newblock \emph{ACM Transactions on Programming Languages and Systems},
  5(4):596--619, 1983.

\bibitem[Leike et al.(2017)]{Leike2017}
J.~Leike, M.~Martic, V.~Krakovna, et~al.
\newblock AI safety gridworlds.
\newblock \emph{arXiv preprint arXiv:1711.09883}, 2017.

\bibitem[Murata(1989)]{Murata1989}
T.~Murata.
\newblock Petri nets: Properties, analysis and applications.
\newblock \emph{Proceedings of the IEEE}, 77(4):541--580, 1989.

\bibitem[Qin et al.(2023)]{Qin2023}
Y.~Qin, S.~Liang, Y.~Ye, et~al.
\newblock ToolLLM: Facilitating large language models to master 16,000+ real-world APIs.
\newblock In \emph{ICLR 2024 (Spotlight)}, 2023.

\bibitem[Shen et al.(2023)]{Shen2023}
Y.~Shen et~al.
\newblock TaskBench: Benchmarking large language models for task automation.
\newblock \emph{arXiv:2311.18760}, 2023.

\bibitem[Yao(1977)]{Yao1977}
A.~C.-C. Yao.
\newblock Probabilistic computations: toward a unified measure of complexity.
\newblock In \emph{Proceedings of the 18th Annual Symposium on Foundations of
  Computer Science (FOCS)}, pages 222--227. IEEE, 1977.

\end{thebibliography}

\appendix

%%==========================================================================
\section{Empirical Detail: Derivation Trees and Aggregate Statistics}
\label{app:empirical}
%%==========================================================================

\subsection{A Representative Trajectory as a Datalog Derivation}

\citet{Spera2026} used a 12-capability Telco deployment ($n=12$, $m=6$, $k=2$).
A joint billing--service session with EDB
$D_{\mathrm{joint}}=\{\mathsf{has}(c_1),\mathsf{has}(c_2),\mathsf{has}(c_3),
\mathsf{has}(c_5),\mathsf{has}(c_7),\mathsf{has}(c_8)\}$ produces the derivation:
\begin{align*}
  \mathsf{has}(c_1)&\;\Rightarrow\;\mathsf{has}(c_3),\mathsf{has}(c_7)
    && \text{[depth 1]}\\
  \mathsf{has}(c_2)&\;\Rightarrow\;\mathsf{has}(c_4),\mathsf{has}(c_5),\mathsf{has}(c_8)
    && \text{[depth 1]}\\
  \mathsf{has}(c_3)\wedge\mathsf{has}(c_5)&\;\Rightarrow\;\mathsf{has}(c_6)
    && \text{[depth 2]}\\
  \mathsf{has}(c_7)\wedge\mathsf{has}(c_8)&\;\Rightarrow\;\mathsf{has}(c_9)
    && \text{[depth 2]}
\end{align*}
The atom $c_9$ (ServiceProvision) is emergent: $c_9\in\Emg(A_{\mathrm{bill}}\cup
A_{\mathrm{svc}})$, with why-provenance $\Wp(\mathsf{has}(c_9))=\{\{c_1,c_2\}\}$.

\subsection{A Real AND-Violation as Datalog Non-Modularity}

The canonical billing+payment failure reaches $c_{12}$ (forbidden, PCI-DSS~4.0).
With $D_{\mathrm{bill}}=\{\mathsf{has}(c_1),\ldots,\mathsf{has}(c_5)\}$ and
$D_{\mathrm{pay}}=\{\mathsf{has}(c_1),\mathsf{has}(c_2),\mathsf{has}(c_{10})\}$:
both are individually safe, but $D_{\mathrm{bill}}\cup D_{\mathrm{pay}}$ contains
$\{\mathsf{has}(c_3),\mathsf{has}(c_{10})\}$ which fires rule $h_6$ to derive
$\mathsf{has}(c_{12})$ and then $\mathsf{forbidden}$.  Minimal witness:
$\{c_3,c_{10}\}\in\B(F)$.

\subsection{Aggregate Statistics}

\begin{center}
\renewcommand{\arraystretch}{1.25}
\small
\begin{tabular}{@{}p{8.5cm}p{3.8cm}@{}}
\toprule
\textbf{Datalog-vocabulary statement} & \textbf{Empirical value} \\
\midrule
Sessions with $\Emg(A)\neq\emptyset$ (conjunctive derivation required)
  & 42.6\% [39.4, 45.8] \\[3pt]
AND-violations under workflow planner (non-modularity instances)
  & 38.2\% [33.4, 43.1] \\[3pt]
AND-violations under hypergraph planner (correct $\Datalog$ evaluation)
  & 0\% (Thm.~\ref{thm:encoding-cap-to-dl}) \\[3pt]
Mean derivation depth, conjunctive sessions
  & 2.4 steps \\[3pt]
Mean derivation depth, long-chain sessions
  & 9.1 steps ($1.70\times$ BFS) \\
\bottomrule
\end{tabular}
\end{center}

\end{document}